\newtheorem{theorem}{Theorem}
\numberwithin{equation}{section}
\newcommand{\eqdef}{\triangleq}
\newcommand{\bm}[1]{{\boldsymbol{#1}}}
\def\x{{\mathbf x}}
\def\y{{\mathbf y}}
\def\h{{\mathbf h}}
\def\bA{{\mathbf A}}
\def\bR{{\mathbf R}}
\newcommand{\s}{\mathbf{s}}
\newcommand{\C}{{\mathbb{C}}}
\newcommand{\R}{{\mathbb{R}}}
\newcommand{\N}{{\mathbb{N}}}
\newcommand{\E}{\mathrm{E}}
\newcommand{\ratee}{\beta_t^{i,2}}
\newcommand{\rate}{\beta_t^{i,1}}
\newcommand{\qed}{\nobreak \ifvmode \relax \else
\ifdim\lastskip<1.5em \hskip-\lastskip
\hskip1.5em plus0em minus0.5em \fi \nobreak
\vrule height0.75em width0.5em depth0.25em\fi}
\def\bdm#1\edm{\begin{displaymath}#1\end{displaymath}}
\def\be#1\ee{\begin{equation}#1\end{equation}}
\def\barr#1\earr{\begin{align}#1\end{align}}
\newcommand{\sfootnote}[1]{{\footnote{\setlength{\baselineskip}{4.0mm}#1}}}
\begin{document}

\title{\setlength{\baselineskip}{10.0mm}
Transmitting important bits and sailing high radio waves: a decentralized
cross-layer approach to \\ cooperative video transmission}

\author{
Nicholas Mastronarde, Francesco Verde, Donatella Darsena, Anna Scaglione, and Mihaela van der Schaar
\thanks{
\setlength{\baselineskip}{4.0mm}
N.~Mastronarde is with the Department of Electrical Engineering,
State University of New York at Buffalo, Buffalo, NY 14260, USA (e-mail: nmastron@buffalo.edu).
This work was done while he was at the
University of California at Los Angeles (UCLA), Los Angeles, CA 90095-1594, USA.

M.~van der Schaar is with the Department of Electrical Engineering, University of
California at Los Angeles (UCLA), Los Angeles, CA 90095-1594,
USA (e-mail: mihaela@ee.ucla.edu).

F.~Verde is with the Department
of Biomedical, Electronic and Telecommunication Engineering,
University Federico II, Naples I-80125, Italy (e-mail: f.verde@unina.it).

D.~Darsena is with the Department for Technologies, Parthenope University,
Naples I-80143, Italy (e-mail: darsena@uniparthenope.it)

A.~Scaglione is with the Department of Electrical
and Computer Engineering, University of California, Davis, CA
95616-5294, USA (e-mail: ascaglione@ucdavis.edu).

The work of N. Mastronarde and M. van der Schaar was supported in part by NSF grant no.
0831549.
}
}

\maketitle

\vspace{-20mm}

\begin{abstract}

\setlength{\baselineskip}{4.5mm}

We investigate the impact of cooperative relaying on
uplink and downlink multi-user (MU) wireless video transmissions.
The objective is to maximize the long-term sum of utilities across
the video terminals in a decentralized fashion, by jointly optimizing
the packet scheduling, the resource allocation, and the cooperation decisions,
under the assumption that some nodes are willing to act as cooperative relays.
A pricing-based distributed resource allocation
framework is adopted, where the price reflects the expected future congestion in the network.
Specifically, we formulate the wireless video transmission problem as
an MU Markov decision process (MDP) that explicitly considers the cooperation at the physical layer
and the medium access control sublayer,
the video users' heterogeneous traffic characteristics, the dynamically varying network conditions,
and the coupling among the users' transmission strategies across time due to the shared wireless resource.
Although MDPs notoriously suffer from the curse of dimensionality, our study shows that, with appropriate simplications and approximations, the complexity of the MU-MDP can be significantly mitigated.
Our simulation results demonstrate that integrating cooperative decisions into the MU-MDP optimization can increase the resource price in networks that only support low transmission rates and can decrease the price in networks that support high transmission rates. Additionally, our results show that cooperation allows users with feeble direct signals to achieve improvements in video quality on
the order of $5-10$ dB peak signal-to-noise ratio (PSNR), with less than 0.8 dB quality loss by users with strong direct signals, and with a moderate increase in total network energy consumption that is significantly less than the energy that a distant node would require to achieve an equivalent PSNR without exploiting cooperative diversity.

\end{abstract}

\vspace{-5mm}

\begin{keywords}
\setlength{\baselineskip}{4.5mm}
Cooperative communications, cross-layer optimization, decode-and-forward relaying,
Markov decision process (MDP), multi-user scheduling,
resource allocation, wireless video transmission.
\end{keywords}

\vspace{-5mm}
\section{Introduction}
\label{sec:intro}

Existing wireless networks provide dynamically varying resources with only limited support for the Quality of Service (QoS) required by delay-sensitive, bandwidth-intense, and loss-tolerant multimedia applications. This problem is further exacerbated in multi-user (MU) settings because they require multiple video streams, with heterogeneous traffic characteristics, to share the scarce wireless resources. To address these challenges, a lot of research has focused on MU wireless communication~\cite{Chiang07, Knopp95, Viswanath02, Tse05, Yu07} and, in particular, MU video streaming over wireless networks~\cite{Zhang07, Huang08, Maani08, Su07, Fu09a}.
%
The majority of this research relies on cross-layer adaptation to match available system resources (e.g., bandwidth, power, or transmission time) to application requirements (e.g., delay or source rate), and vice versa. In MU video streaming applications~\cite{Zhang07, Huang08, Maani08, Su07, Fu09a}, for example, cross-layer optimization is deployed to strike a balance between scheduling lucky users who experience very good fades, and serving users who have the highest priority video data to transmit. This tradeoff is important because rewarding a few lucky participants, as opportunistic multiple access policies do~\cite{Knopp95, Viswanath02, Tse05}, does not translate to providing good quality to the application (APP) layer.
%
Unfortunately, with the exception of~\cite{Yu07,Alay09}, the aforementioned research assumes that wireless users are
{\em noncooperative}. This leads to a basic inefficiency in the way that the network resources are assigned: indeed, good fades experienced by some nodes can go to waste because users with higher priority video data, but worse fades, get access to the shared wireless channel.

A way to not let good fades go to waste is to enlist the nodes that experience good fades as cooperative helpers, using a number of techniques available for cooperative coding~\cite{Laneman-1,Send-1,Laneman-2}. As mentioned above, this idea has been considered in~\cite{Yu07,Alay09}.
In~\cite{Alay09}, for example, a cross-layer optimization is proposed involving the
physical (PHY) layer, the medium access control (MAC) sublayer, and the APP layer,
where layered video coding is integrated with randomized cooperation to enable efficient video multicast in a cooperative wireless network. However, because it is a multicast system, there is no need for an optimal multiple-access strategy, and no need to worry about heterogeneous traffic characteristics.
In~\cite{Yu07}, a centralized network utility maximization (NUM) framework is proposed for jointly optimizing relay strategies and resource allocations in a cooperative orthogonal frequency-division multiple-access (OFDMA) network.
In both~\cite{Yu07,Alay09}, it is assumed that each user has a static utility function of the average transmission rate, where the utility derived by each user in~\cite{Alay09} is a function of the average received rate of the base and enhancement layer video bitstreams. 

Unlike the aforementioned solutions, we take a dynamic optimization approach to the cooperative MU video streaming problem.
In particular, unlike~\cite{Yu07,Alay09}, the solution that we adopt explicitly considers packet-level video traffic characteristics (instead of flow-level) and dynamic network conditions (instead of average case conditions).
Our solution is inspired by the cross-layer resource allocation and scheduling solution in~\cite{Fu09a}, in which the MU wireless video streaming problem is modeled and solved as an MU Markov decision process (MDP) that allows the users, via a uniform resource pricing solution, to obtain long-term optimal video quality in a distributed fashion.  
However, although we use the traffic model and dual decomposition proposed in \cite{Fu09a}, cooperation 
renders our PHY/MAC model  completely different from that studied in \cite{Fu09a}, thus 
opening additional research issues with respect to \cite{Fu09a}, such as how the cooperation decision should be made, what is the impact of cooperation on the resource price, and what is the impact of cooperation on the total network energy consumption.
Moreover,
as recently shown in \cite{Icassp}, augmenting the framework developed in \cite{Fu09a} to also account for cooperation is challenging because of the complexity of the resulting cross-layer MU-MDP optimization.

The contributions of this paper are fourfold.
%
First, we formulate the cooperative wireless video transmission problem as
an MU-MDP using a time-division multiple-access (TDMA)-like network, randomized space-time block coding (STBC) \cite{Sirkeci07}, and a decode-and-forward cooperation strategy. To the best of our knowledge, we are the first to consider cooperation in a dynamic optimization framework.
We show analytically that the decision to cooperate can be made opportunistically, independently of the MU-MDP. Consequently, each user can determine its optimal scheduling policy by only keeping track of its experienced cooperative transmission rates, rather than tracking the channel statistics throughout the network.
%
Second, in light of the fact that opportunistic cooperation is optimal, we propose a low complexity opportunistic cooperative strategy for exploiting good fades in an MU wireless network. The key idea is that nodes can, in a distributed manner, self-select themselves to act as cooperative relays. The proposed self-selection strategy requires a number of message exchanges that is linear in the number of video sources, and selects sets of cooperative relays in such a way that cooperation can be guaranteed to be better than direct transmission.
%
Third, we show experimentally that users with feeble direct signals to the access point (AP) are conservative in their resource usage when cooperation is disabled. In contrast, when cooperation is enabled, users with feeble direct signals to the AP use cooperative relays and utilize resources more aggressively. Consequently, the uniform resource price that is designed to manage resources in the network tends to increase when cooperation is enabled in a network that only supports low transmission rates, but tends to decrease when it is enabled in a network that supports high transmission rates.
%
Fourth, we study the impact of cooperation on the total network energy consumption. We show that the increased transmission rate afforded by cooperation requires an increase in total network energy relative to the lower rate direct transmission; however, this increase is moderate compared to the amount of power required to transmit directly to the access point at a transmission rate equivalent to the cooperative rate.

The remainder of the paper is organized as follows. We introduce the system and application models in Section~\ref{sec:system-model}. In Section~\ref{sec:coopPHY}, we provide expressions for the transmission rate, packet error rate, and network energy consumption in both direct and cooperative transmission modes.
In Section~\ref{sec:opt-scheduling}, we present the proposed MU cross-layer PHY/MAC/APP optimization. In Section~\ref{protocol}, we propose a distributed protocol for opportunistically recruiting cooperative relays. Finally, we report numerical results in Section~\ref{sec:results} and conclude in Section~\ref{sec:concl}.

\vspace{-5mm}
\section{System Model}
\label{sec:system-model}

We consider a network composed of $M$ users streaming video content over a shared
wireless channel to a single AP (see Fig.~\ref{fig:network}).
Such a scenario is typical of many uplink media applications, such as remote monitoring and surveillance, wireless
video sensors, and mobile video cameras. The proposed
optimization framework can also be used for downlink applications, where the
relays can be recruited for streaming video to a certain user in the network in exactly the same way that they can be recruited to transmit to the AP in the uplink scenario.
In Subsection~\ref{sec:mac-phy-model}, we introduce the MAC and PHY layer models. Then, in Subsection~\ref{sec:app-model}, we describe the deployed APP layer model.

\vspace{-5mm}
\subsection{MAC and PHY layer models}
\label{sec:mac-phy-model}

We assume that time is slotted into discrete time-intervals of length $R>0$ seconds and
each time slot is indexed by $t \in \N$.\sfootnote{
The fields of complex, real, and nonnegative integer numbers are
denoted with $\mathbb{C}$, $\mathbb{R}$, and $\mathbb{N}$, respectively;
matrices [vectors] are denoted with upper [lower] case boldface
letters (e.g., $\bA$ or $\x$);
the field of $m \times n$
complex [real] matrices is denoted
as $\C^{m \times n}$ [$\R^{m \times n}$],
with $\C^m$ [$\R^m$] used as a
shorthand for $\C^{m \times 1}$ [$\R^{m \times 1}$];
the superscript $T$ denotes the transpose of a vector;
$\left| \cdot \right|$ denotes the magnitude of a complex number;
$\|\x\|_1$ is the $l_1$ norm of the vector $\x \in \C^n$, which
for positive real-valued vectors is simply the sum of the
components, whereas $\|\x\|_2$ is the Euclidean norm of $\x \in \C^n$;
$\{\bA \}_{ij}$ indicates the $(i+1,j+1)$th element of
the matrix $\bA \in \C^{m \times n}$, with
$i \in \{0,1, \ldots, m-1\}$
and $j\in\{0,1,\ldots,n-1\}$;
a circular symmetric complex Gaussian random variable $X$ with mean $\mu$ and variance $\sigma^2$ is denoted as $X \sim {\cal CN}(\mu, \sigma^2)$;
$\lfloor \cdot \rfloor$ and $\lceil \cdot \rceil$ denote flooring- and ceiling-integer, respectively;
$\E[\cdot]$ stands for ensemble averaging; and, finally, $[\cdot]^+ = \max(\cdot,0)$.
}
At the MAC sublayer, the users access the shared channel using a TDMA-like protocol.
In each time slot $t$, the AP endows the $i$th user, for $i \in \{1,2,\cdots,M\}$, with the resource fraction $x^i_t$,
where $0 \le x^i_t \le 1$, such that the user can use the amount of channel time $R \, x^i_t$ for transmission.
Let $\x_t \eqdef (x^1_t, x^2_t, \ldots, x^M_t)^T \in \R^M$ denote the resource allocation vector at time slot $t$,
which must satisfy the stage resource constraint $\|\x_t\|_1 =\sum_{i=1}^M x_t^i \leq 1$, where the inequality accounts for possible signaling overhead. 

Each node's PHY layer is assumed to be a single-carrier single-input single-output system designed to handle
quadrature amplitude modulation (QAM) square constellations,
with a (fixed) symbol rate of $1/T_s$ symbols per second.
The PHY layer can support a set of $N+1$
data rates $\beta_{n} \eqdef b_{n}/T_{\text{s}}$ (bits/second), where $b_{n} \eqdef \log_2(M_{n})$ is the number of
bits that are sent every symbol period, with $n \in \{0, 1, \ldots,N\}$, and $M_{n}$ is the number of signals in the QAM constellation.
Hence, $\beta_{0} \le \beta_{1} \le \cdots \le \beta_{N}$  form the {\em basic rate set} $\mathcal{B}$ and
$\beta_{0}$ is the {\em base rate} at which the nodes exchange control messages.
Let $d_n$ be the minimum distance of the $M_n$-QAM constellation,
the average transmitter energy per symbol is given by
\be
\mathcal{E}_s \eqdef d_n^2 \left(\frac{M_{n}-1}{6}\right) \: \text{ (Joules) },
\ee
which is assumed to be fixed for all the nodes and data rates, i.e., 
it does not depend on the indices $i$ and $n$.
Consequently, the average power per symbol expended by each transmitter 
is $\mathcal{P}_s \eqdef \mathcal{E}_s/T_s$ (Watts).
We consider a frequency non-selective
block fading model, where $h_t^{i\ell} \in \C$ denotes the fading coefficient
over the $i \rightarrow \ell$ link in time slot $t$,
with $i \neq \ell \in \{0, 1, \ldots, M\}$, and
$i=0$ or $\ell=0$ corresponding to the AP.
It is assumed that all the channels are dual, i.e., $|h_t^{i\ell}|=|h_t^{\ell i}|$,
and that the fading coefficients $h_t^{i\ell}$ are independent and 
identically distributed (i.i.d.) with respect to $t$. 
Moreover, we define ${\bf H}_t \in \C^{M \times M}$
as the matrix collecting the fading coefficients among all of the nodes and the AP,
i.e., $\{{\bf H}_t\}_{i\ell} = h_t^{i\ell}$, for $i \neq \ell \in \{0, 1,\ldots, M\}$.

At the PHY layer, there are two transmission modes to choose from: direct and cooperative. In the {\em direct} transmission mode, as shown in Fig.~\ref{fig:network}, the $i$th source node transmits directly to the AP at the data rate $\beta_t^{i 0} \in \mathcal{B}$ (bits/second) for the assigned transmission time of
$R \, x^i_t$ seconds. In the {\em cooperative} transmission mode, some nodes serve as decode-and-forward relays. Specifically, in the cooperative mode, the assigned transmission time is divided into two phases as illustrated in Fig.~\ref{fig:network}: in {\em Phase I}, the $i$th source node directly broadcasts its own data to all the nodes in the network at the data rate $\rate \in \mathcal{B}$ for $R \, \rho_t^i \, x_t^i$ seconds, where $0 < \rho_t^i <1$ is the Phase I time fraction; 
in {\em Phase II}, some of the nodes overhearing the source transmission, belonging to a certain subset  ${\cal C}_t^i \subseteq \{1, 2, \ldots, M\}-\{i\}$, demodulate the data received in Phase I, re-modulate
the original source bits, and then cooperatively transmit towards the AP, along with the original source
$i$, at the data rate $\ratee \in \mathcal{B}$ for the remaining $R \, (1-\rho_t^i) \, x_t^i$ seconds.
In the sequel, we denote with $\beta_t^{i,\text{coop}}$ (bits/second)
the {\em cooperative data rate} over the two phases, i.e., 
the amount of bits that are transmitted in a single phase 
divided by the overall length of the two phases, which depends on 
the data rates $\rate$ and $\ratee$ attainable in each of the two hops.
The decision to transmit in the direct or cooperative transmission mode depends
on fading coefficients throughout the network in time slot $t$ and on the target
packet error rate (PER).
Thus, the actual transmission rate of the $i$th source in time slot $t$ is dictated by
the cooperation decision $z_t^i \in \{0,1\}$, where $z_t^i = 1$ if cooperation
is chosen, and $z_t^i = 0$ if direct transmission is chosen.
In Section~\ref{sec:coopPHY}, we compute the transmission parameters $\beta_t^{i 0}$ and 
$\beta_t^{i,\text{coop}}$ as functions of a subset of the entries in ${\bf H}_t$,  as well as the
time fraction $\rho_t^i$, and, in Section~\ref{protocol}, we describe how to determine the 
set of cooperative relays ${\cal C}_t^i$ and the cooperation decision $z_t^i$.

\vspace{-5mm}
\subsection{APP layer model and packet scheduling}
\label{sec:app-model}


The source traffic can be modeled using any Markovian traffic model (e.g. \cite{Fu09a, Salodkar10}). However, to accurately capture the characteristics of the video packets, we adopt the sophisticated video traffic model proposed in \cite{Fu09a}, which accounts for the fact that video packets have different deadlines, distortion impacts, and source-coding dependencies (whereas the model in \cite{Salodkar10} does not consider these characteristics). In this section, we describe the key features of this model, but because the problem formulation and novelty of this paper do not depend on the deployed traffic model (so long as the model is Markovian), we refer the interested reader to \cite{Fu09a} for complete details.

For $i \in \{1, 2, \ldots, M\}$, the {\em traffic state} ${\cal T}_t^i \eqdef \{{\cal F}_t^i, {\bf b}_t^i\}$ represents the video data that the $i$th user can potentially transmit in time slot $t$, and comprises the following two components: the schedulable frame set ${\cal F}_t^i$ and the buffer state ${\bf b}_t^i$. In time slot $t$, we assume that the $i$th user can transmit packets belonging to the set of video frames ${\cal F}_t^i$ whose deadlines are within the scheduling time window (STW) $[t,t+W]$. The buffer state ${\bf b}_t^i \eqdef (b_{t,j}^i \,|\, j \in {\cal F}_t^i )^T$ represents the number of packets of each frame in the STW that are awaiting transmission at time $t$. The $j$th component $b_{t,j}^i$ of ${\bf b}_t^i$ denotes the number of packets of frame $j \in {\cal F}_t^i$ remaining for transmission at time $t$.  We assume that each packet has size $P$ bits.
Fig.~\ref{fig:traffic_state} illustrates how the traffic states are defined for a simple IBPB GOP structure.\sfootnote{In a typical hybrid video coder like H.264/AVC or MPEG-2, I, P, and B indicate the type of motion prediction used to exploit temporal correlations between video frames. I-frames are compressed independently of the other frames, P-frames are predicted from previous frames, and B-frames are predicted from previous and future frames.}

We now define the packet scheduling action. In each time slot $t$, the $i$th user takes scheduling action $\y_t^i \eqdef (y_{t,j}^i \, | \, j \in {\cal F}_t^i)^T$, which determines the number of packets to transmit out of ${\bf b}_t^i$.
Specifically, the $j$th component $y_{t,j}^i$ of $\y_t^i$ represents the number of packets of the $j$th frame within the STW that are scheduled to be transmitted in time slot $t$.
Importantly, the scheduling action ${\bf y}^i_t$ is constrained to be in the feasible scheduling action set
${\cal P}^i({\cal T}_t^i,\beta_t^i)$, which depends on the traffic state ${\cal T}_t^i$ and the transmission rate supported by the PHY layer $\beta_t^i$.
In particular, the following three constraints must be met:

\begin{enumerate}

\itemsep=0mm

\item
{\em Buffer}: Every component of ${\bf y}^i_t$ must satisfy $0 \le y_{t,j}^i \le b_{t,j}^i$.

\item
{\em Packet}: The total number of transmitted packets must satisfy $\|{\bf y}^i_t\|_1 =\sum_{j \in {\cal F}_t^i} y_{t,j}^i \leq
\frac{R \, \beta_t^i}{P}$,
where $\beta_t^i=\beta_t^{i 0}$ in the direct transmission mode, i.e., when $z_t^i=0$, and
$\beta_t^i=\beta_t^{i,\text{coop}}$ in the cooperative transmission mode, i.e., when $z_t^i=1$.
Note that $\beta_t^i$ depends on a subset of the elements in ${\bf H}_t$ as described later in Section~\ref{sec:coopPHY}.\sfootnote{We do not include $x_t^i$ in the packet constraint $\|{\bf y}^i_t\|_1 =\sum_{j \in {\cal F}_t^i} y_{t,j}^i \leq \frac{R \, \beta_t^i}{P}$ because $x_t^i$ is not known at the time the scheduling decision ${\bf y}^i_t$ is determined. Once the scheduling decision is determined, the resource allocation $x_t^i$ is determined as $x_t^i = \frac{P}{R  \beta_t^i}   \|{\bf y}_t^i\|_1$ (see ~\eqref{eq:sched2rsrc}). Importantly, the stage resource constraint ensures that the scheduling decisions ${\bf y}_t^i, \forall i \in \{1,\ldots,M\}$, are selected such that $\sum_{i=1}^M x_t^i \leq 1$.}

\item
{\em Dependency}: If there exists a frame $k$ that has not been transmitted, and frame $j$ depends on frame $k$ (denoted by $k \prec j$), then $\left(b_{t,k}^i - y_{t,k}^i\right)y_{t,j}^i=0$. In other words, all packets associated with $k$ must be transmitted before transmitting any packets associated with $j$.

\end{enumerate}

The sequence of traffic states $\{ {\cal T}_t^i : t \in \N \}$ can be modeled as a controllable Markov chain with transition probability function
$p({\cal T}_{t+1}^i \, | \, {\cal T}_t^i, {\bf y}_t^i)$.

\vspace{-5mm}
\section{Cooperative PHY layer transmission}
\label{sec:coopPHY}

In this subsection, with reference to the uplink scenario,
we describe how the direct transmission rate $\beta_t^{i 0}$ and the cooperative transmission rate $\beta_t^{i,\text{coop}}$ depend on a subset of the elements in the channel state matrix ${\bf H}_t$.

Let us first consider the direct $i \rightarrow \ell$ link with instantaneous channel
gain $h_t^{i\ell}$ and data rate $\beta_t^{i\ell} \in \mathcal{B}$ (bits/second) corrupted by
additive white Gaussian noise.
The bit error probability (BEP)  $P_t^{i \ell}(h_t^{i\ell},\beta_t^{i\ell})$ at the output of the maximum likelihood (ML)
detector of node $\ell$, under the assumption that a Gray code
is used to map the information bits into QAM symbols and the
signal-to-noise ratio (SNR) is sufficiently high, can be upper bounded as  (see \cite{Pro})
\be
P_t^{i \ell}(h_t^{i\ell},\beta_t^{i\ell}) \le
4 \, \exp{\left[- \frac{3 \, \gamma \, |h_t^{i\ell}|^2}{2 \left (2^{\beta_t^{i\ell}T_s}-1\right)}\right]} \: ,
\label{bound-direct}
\ee
where $\gamma \eqdef \frac{\mathcal{E}_s}{N_0}$ is
the average SNR per symbol expended by the transmitter and $N_0$ is the noise power spectral density.
Each direct transmission is subject to a PER threshold at the MAC sublayer, which
leads to a BEP constraint $P_t^{i \ell}(h_t^{i\ell},\beta_t^{i\ell}) \le BEP$
at the PHY layer. Consequently,  the achievable data rate $\beta_t^{i\ell}$ under the BEP constraint is
\be
\beta_t^{i\ell} = \frac{1}{T_s} \left \lfloor \log_2\left(1+\Gamma \, |h_t^{i\ell}|^2\right) \right \rfloor \: ,
\quad
\text{where}
\quad
\Gamma \eqdef  \frac{3 \, \gamma}{2 \, \left|\log_e \left(\frac{BEP}{4}\right)\right|} \: .
\label{beta}
\ee
The data rate $\beta_t^{i0}$ over the link between the source and the AP is obtained using \eqref{beta}
by setting $\ell=0$.
In this case, the number of symbols required to transmit a packet of $P$ bits is equal to 
$K_t^{i 0} \eqdef \lceil P/(\beta_t^{i 0}\, T_s) \rceil$. Thus, 
neglecting receive and processing energy consumption,
the energy required for a direct transmission of one packet is equal to
\be
\mathcal{E}_t^{i 0} \eqdef K_t^{i 0} \, \mathcal{E}_s =
\frac{P \, \mathcal{E}_s}{\beta_t^{i 0}\, T_s}  =
P \, \frac{\mathcal{P}_s}{\beta_t^{i 0}}  \: \text{ (Joules)}.
\label{E-0}
\ee
It is worth noting that the energy expended in direct mode is inversely proportional to the achievable data rate $\beta_t^{i 0}$.

At this point, let us consider the cooperative mode.
Because of possible error propagation, the end-to-end BEP
for a two-hop  cooperative transmission is 
cumbersome to calculate exactly with decode-and-forward relays;
therefore, the relationship that ties $\rate, \ratee$, and the relevant channel state information,
and that guarantees a certain reliability of the overall link, is not as simple as \eqref{beta}.
To significantly simplify the computation of $\rate$ and $\ratee$,
we use two different BEP thresholds $BEP_1$ and $BEP_2$
for the first and second hops, respectively.
The threshold $BEP_1$ is typically a large percentage of the total error rate budget, 
say $BEP_1 = 0.9 \, BEP$, and  $BEP_2 = BEP - BEP_1$, 
since the first link is the bottleneck
in decode-and-forward relaying. Indeed,  the performance at each relay is that of a single-input 
single-output system transmitting over a fading channel. On the other hand, 
the transmission over the second link (from the recruited relays to the destination)
can be regarded as a distributed multiple-input single-output system operating over 
a fading channel; consequently, the performance at the destination, which 
can take advantage from cooperative diversity, is significantly better 
than that of each source-to-relay link, even when a small number of
relays are recruited. Moreover, due to this fact and 
since the exponential function in \eqref{bound-direct}
decays fast as a function of its argument, 
we reasonably assume
that  the end-to-end BEP at the output of the ML detector of the AP is dominated
by the BEP over the worst source-to-relay channel, i.e., the link for which 
$|h_t^{i\ell}|$ is the smallest one.
Under this assumption, accounting for \eqref{beta}, we can estimate $\rate$ in Phase I as
\be
\rate =  \frac{1}{T_s} \left \lfloor \log_2\left(1+\Gamma_1 \, \min_{\ell \in {\mathcal C}_t^i} |h_t^{i\ell}|^2\right) \right \rfloor \:,
\label{beta_1}
\ee
where $\Gamma_1$ is obtained from $\Gamma$ by replacing $BEP$ with $BEP_1$.
In this phase, which lasts $R \, \rho_t^i \, x_t^i$ seconds, 
the number of  symbols needed to transmit a packet of $P$ bits is equal to $K_t^{i,1} = \lceil P/(\beta_t^{i,1}\, T_s) \rceil$ and, thus, it must result that 
\be
K_t^{i,1} \, T_s = \frac{P}{\beta_t^{i,1}}=R \, \rho_t^i \, x_t^i 
\quad \Longrightarrow \quad 
P = R \, \beta_t^{i,1} \rho_t^i \, x_t^i\: .
\label{cond-1}
\ee

Supposing that a subset ${\cal C}_t^i$ of the available nodes are recruited to serve as
relays in Phase II, these nodes, along with the $i$th user, cooperatively forward the source message
by using a randomized STBC rule \cite{Sirkeci07}. More specifically,  assuming error-free
demodulation at the decode-and-forward relays, if ${\bf a}_t^i \in \C^{K_t^{i,2}}$ gathers
the block of i.i.d. QAM source symbols to be transmitted in Phase II of time slot $t$, then
at the $\ell$th node, for each $\ell \in \{i\} \cup {\cal C}_t^i$, the vector ${\bf a}_t^i$ is mapped onto an
{\em orthogonal} space-time code matrix
$\bm{\mathcal{G}}({\bf a}_t^i) \in \C^{Q \times L}$ \cite{Tarokh}, where $Q$
is the block length and $L$ denotes the number of
antennas in the underlying space-time code.
During Phase II, the $\ell$th node transmits
a linear weighted combination of the columns of $\bm{\mathcal{G}}({\bf a}_t^i)$, with
the weights of the $L$ columns of $\bm{\mathcal{G}}({\bf a}_t^i)$
contained in the vector ${\bf r}_{\ell} \in \C^L$. We denote with
$\mathbf{R} \eqdef ({\bf r}_{\ell} \, | \, \ell \in {\cal C}_t^i) \in \C^{L \times N_t^i}$
the weight matrix of all the cooperating nodes, where $N_t^i \le M$  is the cardinality
of ${\cal C}_t^i$.\sfootnote{One specific code of the STBC matrix
is always assigned to the source itself, which transmits over the cooperative link every time cooperation is activated.
This can be accounted for by simply setting ${\bf r}_{i} = (1,0\ldots,0)^T$ and
replacing the first row of
$\mathbf{R}$ with $(0\ldots,0)$, whereas the remaining entries of
$\mathbf{R}$ are identically and independently generated random
variables with zero mean and variance $1/L$.}
Under the randomized STBC rule, the AP observes the space-time coded signal $\bm{\mathcal{G}}({\bf a}_t^i)$
with {\em equivalent} channel vector $\widetilde{\h}_t^{i,2} \eqdef {h}_t^{i 0} \, {\bf r}_{i} + \mathbf{R} \, {\h}_t^{i,2}$, where
${\h}_t^{i,2} \eqdef ({h}_t^{\ell 0} \, | \, \ell \in {\cal C}_t^i)^T \in \C^{N_t^i}$
collects all the channel coefficients between the
relay nodes and the AP (see Fig.~\ref{fig:network}).
Note that the AP only needs to estimate $\widetilde{\h}_t^{i,2}$
for coherent ML decoding and that the randomized coding is
decentralized since the $\ell$th relay chooses ${\mathbf r}_\ell$ locally.
By capitalizing on the orthogonality of the underlying STBC matrix $\bm{\mathcal{G}}({\bf a}_t^i)$,
the BEP $P_t^{i,2}(\widetilde{\h}_t^{i,2} ,{\beta}_t^{i,2})$ over the second hop at the output of the
ML detector of the AP using data rate  ${\beta}_t^{i,2}$ (bits/second) can be upper bounded as
in \eqref{bound-direct} by replacing  $|h_t^{i\ell}|^2$ and $\beta_t^{i\ell}$ with
$\|\widetilde{\h}_t^{i,2} \|^2$ and  ${\beta}_t^{i,2}$, respectively.
By imposing the BEP constraint
$P_t^{i,2}(\widetilde{\h}_t^{i,2} ,{\beta}_t^{i,2}) \le BEP_2$,
the data rate ${\beta}_t^{i,2}$ attainable on the second hop
of the cooperating link is given by
\be
{\beta}_t^{i,2}=\frac{1}{T_s} \left \lfloor \log_2[1+\Gamma_2 \, (|{h}_t^{i 0}|^2+\|\mathbf{R} \, {\h}_t^{i,2}\|^2)] \right \rfloor \: ,
\label{beta_2}
\ee
where $\Gamma_2$ is obtained from $\Gamma$ in \eqref{beta} by replacing $BEP$ with $BEP_2$.
In this phase, which lasts $R \, (1-\rho_t^i) \, x_t^i$ seconds, the number of  symbols needed to 
transmit a packet of $P$ bits is equal to 
$K_t^{i,2} = \lceil P/(\beta_t^{i,2}\, T_s) \rceil$ and, thus, it must result that 
\be
Q \, T_s =\frac{P}{R_{c} \, \beta_t^{i,2}}=R \, (1-\rho_t^i) \, x_t^i 
\quad \Longrightarrow \quad P = R \, R_{c} \, \beta_t^{i,2} \, (1-\rho_t^i) \, x_t^i  \: ,
\label{cond-2}
\ee
where $R_c \eqdef K_t^{i,2}/Q \le 1$ is the rate of the orthogonal STBC rule.
From \eqref{cond-1} and \eqref{cond-2}, the transmission time for the two phase communication mode is
\be
R \, x_t^i = \frac{P}{\beta_t^{i,1}} + \frac{P}{R_{c} \, \beta_t^{i,2}}
= P \underbrace{\left(\frac{1}{\beta_t^{i,1}} + \frac{1}{R_c \, \beta_t^{i,2}}\right)}_{\frac{1}{\beta_t^{i,\text{coop}}}} 
=\frac{P}{\beta_t^{i,\text{coop}}} \: ,
\label{beta-formula}
\ee
which also unveils what is the functional dependence of $\beta_t^{i,\text{coop}}$ on 
$\rate$ and $\ratee$. Moreover, from \eqref{cond-1} and \eqref{cond-2}, it is required that
\be
R \, \beta_t^{i,1} \rho_t^i \, x_t^i =R \, R_{c} \, \beta_t^{i,2} \, (1-\rho_t^i) \, x_t^i
\quad \Longrightarrow \quad
\rho_t^i=\frac{1}{1+ \rate/(\ratee \, R_c)} \: ,
\ee
which  shows that, given the STBC rule, the time fraction $\rho_t^i$ is
determined by the data rates in Phase I and II. The cooperative mode is activated 
only if the cooperative transmission is more data-rate efficient
than the direct communication, i.e., only if $\beta_t^{i,\text{coop}} >
\beta_t^{i 0}$,  which from \eqref{beta-formula} leads to the following condition
\be
\frac{1}{\rate}+\frac{1}{R_c \, \ratee} < \frac{1}{\beta_t^{i 0}}\: .
\label{coop-useful}
\ee
If condition \eqref{coop-useful} is fulfilled, then the opportunistically optimal 
cooperation decision is  $z_t^i = 1$ ;
otherwise, the $i$th source transmits to the AP in direct mode
and $z_t^i=0$.

It is interesting to evaluate the energy consumption in the case of a cooperative transmission.
Neglecting receive and processing energy consumption, 
the energy expended by the source $i$ for transmission of one packet is equal to
\be
\mathcal{E}_t^{i,\text{source}}= \left(K_t^{i,1}+K_t^{i,2}\right) \mathcal{E}_s =
P \, \frac{\mathcal{P}_s}{\beta_t^{i,\text{coop}}}  \: \text{ (Joules)},
\label{E-source}
\ee
whereas the energy expended by each recruited relay node for transmitting one packet of the $i$th source is given by
\be
\mathcal{E}_t^{i,\text{relay}}= K_t^{i,2} \, \mathcal{E}_s =
P \, \frac{\mathcal{P}_s}{\ratee \, R_c}  \: \text{ (Joules)}.
\label{E-relay}
\ee
It is noteworthy from \eqref{E-0} and \eqref{E-source}
that, since cooperation is activated only when 
$\beta_t^{i,\text{coop}} > \beta_t^{i 0}$, the
energy expended by the source node $i$ 
for a cooperative transmission is smaller than that  
required by the same node for a direct transmission.
On the other hand, the energy \eqref{E-relay} expended by the relays 
is inversely proportional to the achievable data rate in Phase II. 
Therefore,  provided that $\ratee \, R_c \gg  \beta_t^{i 0}$, over a sufficiently long period, 
the energy expenditure in relaying another node's data can be partially compensated for 
when the recruited relay acts as a source in the network.
The total energy expended in the network to transmit $\|{\bf y}_t^i\|_1$ packets for user $i$ can be expressed as
\be
\label{eq:total-energy}
\mathcal{E}_t^i \left({\bf y}_t^i, z_t^i, \mathcal{C}_t^i\right) = \left\{
\begin{array}{l l}
  \|{\bf y}_t^i\|_1 \, \mathcal{E}_t^{i0} \:, & \quad \mbox{if $z_t^i=0$} \:;\\
  \|{\bf y}_t^i\|_1 \left( \mathcal{E}_t^{i,\text{source}} + N_t^i \, \mathcal{E}_t^{i,\text{relay}} \right) \:, & \quad \mbox{if $z_t^i=1$} \: .\\
 \end{array} \right.
\ee
The energy consumption in the direct and cooperative modes is 
numerically compared in Section~\ref{sec:results}.

\vspace{-5mm}
\section{Cooperative Multi-User Video Transmission}
\label{sec:opt-scheduling}

Recall that ${\cal T}_t^i$ denotes the $i$th user's traffic state
and ${\bf H}_t$ collects the channel coefficients among all the nodes and the AP. Hence, the global state can be defined as ${\s}_t \eqdef \left({\cal T}_t^1, {\cal T}_t^2, \ldots, {\cal T}_t^M, {\bf H}_t\right) \in {\cal S}$, where ${\cal S}$ is a discrete set
of all possible states.\sfootnote{To have a discrete set of network states, the individual link states in ${\bf H}_t$ are quantized into a finite number of bins (see \cite{Wang04} for details).}
Since: (i) the $i$th user's traffic state evolves as a Markov process controlled by its scheduling action ${\bf y}_t^i$; (ii) the $i$th user's traffic state transition is conditionally independent of the other users' traffic state transitions given ${\bf y}_t^i$; and (iii) the state of each $i \rightarrow \ell$ link $h_t^{i \ell}$ is assumed to be i.i.d. with respect to time; the sequence of global states
$\{ \s_t : t \in \N\}$ can be modeled as a controlled Markov process with transition probability function
\be
p({\s}_{t+1} \, | \, {\s}_t, {\bf y}_t)
=p\left({\bf H}_{t+1} \right) \prod_{i=1}^M \, p({\cal T}_{t+1}^i \, | \, {\cal T}_t^i, {\bf y}_t^i) \: ,
\ee
where ${\bf y}_t \eqdef (\{{\bf y}_t^1\}^T, \{{\bf y}_t^2\}^T, \ldots, \{{\bf y}_t^M\}^T)^T$ collects the scheduling actions
of all the video users. 

Under the scheduling action ${\bf y}_t^i$, the $i$th user obtains the immediate utility
\be
u^i({\cal T}_t^i, {\bf y}_t^i) \eqdef \sum_{j \in {\cal F}_t^i} q_j^i \, y_{t,j}^i \:,
\ee
which is the total video quality improvement experienced by the $i$th user by taking scheduling action ${\bf y}_t^i$ in traffic state ${\cal T}_t^i$ under the assumption that quality is incrementally additive \cite{Chou06}.

The objective of the MU optimization is the maximization of the {\em expected discounted sum of utilities} with respect to the joint scheduling action ${\bf y}_t$ and the cooperation decision vector ${\bf z}_t \eqdef (z_t^1, z_t^2, \ldots, z_t^M)^T$ taken in each state ${\s}_t$.
Due to the stationary Markovian transition probability function, the optimization can be formulated as an MDP that satisfies the following dynamic programming equation\sfootnote{In this section, since we model the problem as a stationary MDP, we omit the time index when it does not create confusion. In place of the time index, we use the notation $(\cdot)^\prime$ to denote a state variable in the next time step (e.g. ${\cal T}^{i \prime}$, ${\bf H}^\prime$, ${\bf s}^\prime$).}
\be
U^*({\bf s})=\max_{{\bf y}, {\bf z}}
\left\{ \sum_{i=1}^M u^i({\cal T}^i,{\bf y}^i)
+\alpha \sum_{{\bf s}^\prime \in {\cal S}}
p({\bf H}^\prime) \prod_{i=1}^{M}p({\cal T}^{i \prime} \, | \, {\cal T}^i,{\bf y}^i) \, U^*({\bf s}^\prime) \right\},
\forall {\bf s},
\label{eq:mup-src-coop}
\ee
subject to
\be
{\bf y}^i \in {\cal P}^i({\cal T}^i,{\beta^i}) \quad \text{and} \quad \sum_{i=1}^{M} x^i \leq 1
\ee
where $x^i$ is the time-fraction allocated to the $i$th user given its scheduling action ${\bf y}^i$ and transmission rate $\beta^i$, i.e.,
\be
\label{eq:sched2rsrc}
x^i = \frac{P}{R \, \beta^i} \, \|{\bf y}^i\|_1 \: ,
\ee
the parameter $\alpha \in [0,1)$ is the ``discount factor'', which accounts for the relative importance of the present and future utility, and ${\cal P}^i({\cal T}^i,{\bf H})$ is the set of feasible scheduling actions given the traffic state ${\cal T}^i$ and channel state matrix ${\bf H}$. From Theorem 6.2.5 in~\cite{PutermanMDP}, we know that there exists a stationary optimal policy that is the global optimal solution to~\eqref{eq:mup-src-coop} .

Given the distributions $p({\bf H})$ and $p({\cal T}^{i \prime} \, | \, {\cal T}^i,{\bf y}^i)$ for all $i$, the above MU-MDP can be solved by the AP using value iteration or policy iteration \cite{Bertsekas05}.
However, there are two challenges associated with solving the above MU-MDP.
First, the complexity of solving an MDP is proportional to the cardinality of its state-space ${\cal S}$, which, in the above MU-MDP, scales exponentially with the number of users, i.e., $M$, and with the number of links in ${\bf H}$, i.e., $M^2$. Hence, even for moderate sized networks, it is impractical to compute, or even to encode, $U^*({\bf s})$. In subsection~\ref{sec:simplify}, we show that the exponential dependence on the number of links in ${\bf H}$ can be eliminated.
Second, in the uplink scenario, the traffic state information is local to the users, so neither the AP nor the users have enough information to solve the above MU-MDP. In subsection~\ref{sec:distributed}, we summarize the findings in~\cite{Fu09a} that show that the considered optimization can be approximated to make it amenable to a distributed solution. Additionally, this distributed solution eliminates the exponential dependence on the number of users. Note that the simplification in subsection~\ref{sec:simplify} is very important, because only after obtaining this result does it become possible to use the solution in~\cite{Fu09a}.

\vspace{-5mm}
\subsection{Reformulation with simplified network state}
\label{sec:simplify}

The only reason to include the detailed network state information ${\bf H}$ and the cooperation decision ${\bf z}$ in the MU-MDP is to make foresighted cooperation decisions, which take into account the impact of the immediate cooperation decision on the expected future utility of the users. However, if we can show that the optimal opportunistic (i.e., myopic) cooperation decision is also long-term optimal, then the detailed network state information does not need to be included in the MU-MDP. The following theorem shows that the optimal opportunistic cooperation decision, which maximizes the immediate transmission rate, is also long-term optimal.

\begin{theorem}[Opportunistic cooperation is optimal]
\label{Thm-1}
If utilizing cooperation incurs zero cost to the source and relays, then the optimal opportunistic cooperation decision, which maximizes the immediate throughput, is also long-term optimal.
\end{theorem}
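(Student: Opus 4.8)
The plan is to show that the cooperation decision vector $\mathbf{z}$ enters the dynamic programming equation \eqref{eq:mup-src-coop} \emph{only} through the feasibility constraints, and that choosing $\mathbf{z}$ opportunistically always yields the most permissive such constraints. Since the stage reward $\sum_i u^i(\mathcal{T}^i,\mathbf{y}^i)$ and the transition kernel $p(\mathbf{H}^\prime)\prod_i p(\mathcal{T}^{i\prime}\,|\,\mathcal{T}^i,\mathbf{y}^i)$ are both functionally independent of $\mathbf{z}$ --- and this is exactly where the zero-cost hypothesis is used: no term of the utility charges the source or the recruited relays for cooperating, and a node's traffic state evolves only with its own scheduling action, not with whether it has been recruited as a relay --- it will follow that pinning $\mathbf{z}$ to the opportunistic choice neither increases (it is a restriction) nor decreases (by the feasibility argument below) the value of the maximization. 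Hence the value function $U^*$ is unchanged, and, by Theorem~6.2.5 of \cite{PutermanMDP}, there is a stationary optimal policy that uses the opportunistic cooperation decision in every state.

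Concretely, I would proceed in three steps. First, recall from Section~\ref{sec:coopPHY} that the opportunistic decision sets $z^i=1$ iff \eqref{coop-useful} holds, i.e., iff $\beta^{i,\text{coop}}>\beta^{i0}$; equivalently it realizes the rate $\beta^{i,\star}\eqdef\max(\beta^{i0},\beta^{i,\text{coop}})\ge\beta^i(z^i)$ for either value of $z^i$, and this choice is a deterministic function of the observed channel entries (and the recruited set) alone, independent of the traffic states and of $U^*$. Second, establish monotonicity of the feasible scheduling set in the rate: if $\beta'\ge\beta$ then $\mathcal{P}^i(\mathcal{T}^i,\beta)\subseteq\mathcal{P}^i(\mathcal{T}^i,\beta')$, because among the buffer, packet, and dependency constraints defining $\mathcal{P}^i$ only the packet constraint $\|\mathbf{y}^i\|_1\le R\beta^i/P$ involves $\beta^i$, and it is relaxed as $\beta^i$ grows while the other two are untouched. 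Third, note that for a \emph{fixed} scheduling action $\mathbf{y}^i$ the induced resource share $x^i=\frac{P}{R\beta^i}\|\mathbf{y}^i\|_1$ from \eqref{eq:sched2rsrc} is nonincreasing in $\beta^i$, so replacing every $\beta^i$ by $\beta^{i,\star}$ can only decrease $\sum_i x^i$ and thus preserves the stage resource constraint $\sum_i x^i\le 1$.

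Putting these together, for any state $\mathbf{s}$ and any $(\mathbf{y},\mathbf{z})$ feasible in \eqref{eq:mup-src-coop}, the pair $(\mathbf{y},\mathbf{z}^{\text{opp}})$ is also feasible: by Step~2 each $\mathbf{y}^i$ remains in $\mathcal{P}^i(\mathcal{T}^i,\beta^{i,\star})$, and by Step~3 the resource constraint still holds. Because the objective in \eqref{eq:mup-src-coop} is invariant under replacing $\mathbf{z}$ by $\mathbf{z}^{\text{opp}}$, the supremum over $(\mathbf{y},\mathbf{z})$ equals the supremum over $\mathbf{y}$ with $\mathbf{z}$ fixed to $\mathbf{z}^{\text{opp}}$. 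I would phrase this at the level of the Bellman operator --- showing the unrestricted operator and the operator restricted to opportunistic cooperation coincide pointwise on every bounded value function --- so that they share the unique fixed point $U^*$, which yields the claimed long-term optimality of the opportunistic decision as well as the reduction that lets $\mathbf{H}$ and $\mathbf{z}$ be dropped from the MDP state.

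The main obstacle is not any single calculation but making the ``zero cost'' hypothesis bite rigorously: I must verify that \emph{nothing} in the per-stage reward or in the transition dynamics of any user --- source or recruited relay --- depends on $\mathbf{z}$ or on the relay sets $\mathcal{C}^i$, only on the resulting rates $\beta^i$. In particular one must confirm that a node serving as a relay for user $i$ suffers no modification of its own traffic-state transition $p(\mathcal{T}^{\ell\prime}\,|\,\mathcal{T}^\ell,\mathbf{y}^\ell)$ and no deduction from its own utility; the energy expenditures \eqref{E-relay}--\eqref{eq:total-energy}, which \emph{do} depend on cooperation, are deliberately left out of the utility, and it is precisely this exclusion that the hypothesis names. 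Once that is pinned down, the monotonicity argument is routine.
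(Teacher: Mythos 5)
Your proof is correct, and it rests on the same key observation as the paper's: the cooperation decision $z^i$ enters the problem only through the rate $\beta^i$, the feasible set ${\cal P}^i({\cal T}^i,\beta^i)$ is monotone nondecreasing in $\beta^i$ (only the packet constraint involves the rate), and neither the stage reward nor the traffic-state transition kernel depends on ${\bf z}$ or ${\cal C}^i$. The execution differs in where the argument is run. The paper's Appendix~I works on the per-user Lagrangian-relaxed Bellman equation \eqref{eq:user_bellman}, comparing $U_\lambda^{i,*}$ under $\beta_{\rm opp}^{i*}$ against the value under an arbitrary $\beta^i\le\beta_{\rm opp}^{i*}$ via a single inequality justified by the set inclusion; you instead work directly on the primal constrained MU-MDP \eqref{eq:mup-src-coop} and argue that the Bellman operator with unrestricted ${\bf z}$ and the operator with ${\bf z}$ pinned to ${\bf z}^{\rm opp}$ coincide on every bounded value function, hence share the fixed point. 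Your route buys two things the paper leaves implicit: (i) in the primal problem one must check that switching to ${\bf z}^{\rm opp}$ with ${\bf y}$ held fixed does not violate the stage constraint $\sum_i x^i\le 1$; your Step~3 does this via the monotonicity of $x^i=\frac{P}{R\beta^i}\|{\bf y}^i\|_1$ in $\beta^i$ (the paper's Lagrangian version needs the analogous fact that $u_\lambda^i$ is nondecreasing in $\beta^i$ for fixed ${\bf y}^i$, which requires $\lambda\ge 0$ and is not stated); and (ii) the operator-level fixed-point phrasing is cleaner than the paper's direct comparison of two distinct value functions $U_\lambda^{i,*}$ and $\bar U_\lambda^{i}$, which as written really needs an induction over value iteration to be airtight. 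Your closing caveat about verifying that relay duty does not perturb a recruited node's own reward or traffic dynamics is exactly the content of the ``zero cost'' hypothesis, and matches how the paper uses it.
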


\begin{proof}
See Appendix~I.
\end{proof}

To intuitively understand why maximizing the immediate transmission rate at the PHY layer is long-term optimal, consider what happens when a user chooses not to maximize its immediate transmission rate (i.e., does not utilize the optimal opportunistic cooperation decision). Two things can happen: either less packets are transmitted overall because of packet expirations; or, the same number of packets are transmitted overall, but their transmission incurs additional resource costs because transmitting the same number of packets at a lower rate requires more resources [see \eqref{eq:sched2rsrc}]. In either case, the long-term utility is suboptimal.
A consequence of Theorem~\ref{Thm-1} is that the cooperation decision vector ${\bf z}$ does not need to be included in the MU-MDP. Instead, it can be determined opportunistically by selecting ${\bf z}$ to maximize the immediate transmission rate. Most importantly, this means that the MU-MDP does not need to include the high-dimensional network state.

We now make two remarks regarding Theorem~\ref{Thm-1} so that its consequences are not misinterpreted.
First, in the introduction, we noted that maximizing throughput is a suboptimal multiple access strategy for wireless video. This does not contradict Theorem~\ref{Thm-1} because it only states that the {\em cooperation decision} should be made opportunistically to maximize the immediate transmission rate. Indeed, myopic (opportunistic) resource allocation and scheduling is suboptimal because it does not take into account the dynamic video data attributes (i.e., deadlines, priorities, and dependencies).
Second, although the users' MDPs do not need to include the high-dimensional network state, the optimal resource allocation and scheduling strategies still depend on it; however, instead of tracking ${\bf H}_t$, it is sufficient to track the users' optimal opportunistic transmission rates provided by the PHY layer, i.e., $\beta_t^i$ for all $i$. Under the assumption that the channel coefficients are i.i.d. random variables with respect to $t$, $\beta_t^i$ can also be modeled as an i.i.d. random variable with respect to $t$. We let $p(\beta^i)$ denote the probability mass function (pmf) from which $\beta_t^i$ is drawn. We note that $p(\beta^i)$ depends on $p({\bf H})$ and the deployed PHY layer cooperation algorithm.

Based on the second remark, we can simplify the maximization problem in \eqref{eq:mup-src-coop}. Let us define the $i$th user's state as $s^i \eqdef \left({\cal T}^i,\beta^i \right) \in {\cal S}^i$ and redefine the global state as ${\bf s} \eqdef (s^1,\ldots,s^M)^T$. In Section~\ref{protocol}, we describe how $\beta^i$ is determined, but for now we will take for granted that it is known. Because the optimization does not need to include the cooperation decision, the maximization of the expected sum of discounted utilities in \eqref{eq:mup-src-coop} can be simplified by only maximizing with respect to the scheduling action ${\bf y}$ in each state ${\bf s}$, that is,
\be
U^*({\bf s})=\max_{{\bf y}}
\left\{ \sum_{i=1}^M u^i({\cal T}^i,{\bf y}^i)
+\alpha \sum_{{\bf s}^\prime \in {\cal S}}
\prod_{i=1}^{M}p({s}^{i \prime} \, | \, s^i, {\bf y}^i) \, U^*({\bf s}^\prime) \right\},
\forall {\bf s},
\label{eq:mup-src}
\ee
subject to
\be
{\bf y}^i \in {\cal P}^i({\cal T}^i,{\beta}^i) \quad \text{and} \quad
\displaystyle \sum_{i=1}^{M} x^i \leq 1,
\ee
where $p({s}^{i \prime} \, | \, s^i, {\bf y}^i) = p({\beta^{i \prime}}) \, p({\cal T}^{i \prime} \, | \, {\cal T}^i,{\bf y}^i)$.

\vspace{-5mm}
\subsection{Distributed solution}
\label{sec:distributed}

Similar to~\cite{Fu09a}, \eqref{eq:mup-src} can be reformulated as an unconstrained MDP using Lagrangian relaxation. The key idea is to introduce a Lagrange multiplier $\lambda_{\bf s}$ associated with the stage resource constraint $\sum_{i=1}^{M} x^i \leq 1$ in each global state ${\bf s}$ because every global state has a different resource-quality tradeoff. The resulting dual solution has zero duality gap compared to the primary problem [i.e., \eqref{eq:mup-src}], but it still depends on the global state so it is not amenable to a distributed solution. However, by imposing a uniform resource price $\lambda_{\bf s}=\lambda$, $\forall {\bf s} \in {\cal S}$, which is independent of the multi-user state, the resulting MU-MDP can be decomposed into $M$ MDPs, one for each user~\cite{Fu09a}.\sfootnote{We note that the resource price is only used to efficiently allocate the limited wireless resources among the users; it is not used to generate revenue for the AP. In other words, it is a congestion price rather than a real price.} These local MDPs satisfy the following dynamic programming equation
\be
U^{i, *} (s^i, \lambda) = \max_{{\bf y}^i}
\left[u^i({\cal T}^i,{\bf y}^i)- \lambda\left(x^i-\frac{1}{M}\right) +
\alpha \sum_{{s}^{i \prime} \in \mathcal{S}} p({s}^{i \prime} \, | \, s^i, {\bf y}^i) \, U^{i,*}({s}^{i \prime}, \lambda)\right] \: ,
\label{eq:user_bellman}
\ee
\be
\hat{U}^{\lambda^*}({\bf s}) = \min_{\lambda \geq 0}
\sum_{i=1}^M U^{i,*}(s^i,\lambda) \: ,
\label{eq:sum_user_bellman}
\ee
subject to ${\bf y}^i \in {\cal P}^i({\cal T}^i,\beta^i)$. Importantly, the $i$th user's dynamic programming equation defines the optimal scheduling action as a function of the $i$th user's state, rather than the global state ${\bf s}$. In this paper, the $i$th user solves \eqref{eq:user_bellman} offline using value iteration; however, it can be easily solved online using reinforcement learning as in \cite{Fu09a} and \cite{Salodkar10}. Also, note that due to the distributed nature of the proposed algorithm, the stage resource constraint $\sum_{i=1}^M x_t^i  \leq 1$ is not guaranteed to be satisfied during convergence or at steady-state. Because the stage resource constraint may be violated, it must be enforced separately by the AP, which we assume normalizes the requested resource allocations and, subsequently, has the users recompute their scheduling policies to satisfy the new allocations.

Although the optimization can be decomposed across the users, the optimal resource price $\lambda$ still depends on all of the users' resource demands. Hence, $\lambda$ must be determined by the AP in both the uplink and downlink scenarios. Specifically, the resource price can be numerically computed by the AP using the subgradient method.
The subgradient with respect to $\lambda$ is given by
$\sum_{i=1}^{M}X^i-\frac{1}{1-\alpha}$, where $X^i = E\left[\sum_{t=0}^{+\infty}\alpha^t x_t^i \, | \, s^i_0 \right]$ is the $i$th user's expected discounted accumulated resource consumption, which can be calculated as described in~\cite{Fu09a}. Importantly, $X^i$ can be computed locally by the $i$th user in the uplink scenario and by the AP in the downlink scenario. Using the subgradient method, the resource price is updated as
\be
\lambda^{k+1}=\left[\lambda^k + \mu^k \left(\sum_{i=1}^{M}X^i - \frac{1}{1-\alpha}\right)\right]^+ \:,
\ee
where $\mu^k$ is a diminishing step size.
Since the focus of this paper is on the interaction between the multiuser video transmission and the cooperative PHY layer, we refer the interested reader to~\cite{Fu09a} for complete details on the dual decomposition outlined in this subsection, and the derivation of the subgradient with respect to $\lambda$.

We note that a similar decomposition has recently been proposed for energy-efficient uplink scheduling with delay constraints in multiuser wireless networks using a different MU-MDP framework \cite{Salodkar10}. Besides the fact that \cite{Salodkar10} does not consider physical layer cooperation or heterogeneous traffic characteristics, there is one significant difference between the decomposition in \cite{Salodkar10} and the one adopted in this paper. Specifically, the TDMA-like protocol in \cite{Salodkar10} assumes that only one user can transmit in each time slot, whereas we consider a TDMA-like protocol in which each time slot is divided into different length transmission opportunities for each user. Moreover, in \cite{Salodkar10}, every user has a unique Lagrange multiplier associated with its average buffer delay constraint. In contrast, in our decomposition, all users have the same Lagrange multiplier, which regulates the resource division among the users, rather than their individual delay constraints. Note that, in this paper, delay constraints are included in the application model. Importantly, Theorem 1 applies to the MU-MDP formulation in \cite{Salodkar10} and therefore  the recruitment protocol proposed in Section~\ref{protocol} can be used to integrate cooperation into \cite{Salodkar10}. In other words, the novelty and technical contributions of this paper are independent of the dual decomposition in~\cite{Fu09a}, which we only use for illustrative purposes.

\vspace{-5mm}
\section{Recruitment protocol}
\label{protocol}
With reference to the uplink scenario, we define our opportunistic cooperative 
strategy to select distributively the set of cooperative relays ${\cal C}_t^i$
and make the decision $z_t^i$ at the AP. The downlink case is a minor variation.

Importantly, the AP can exactly evaluate ${\beta}_t^{i,2}$ in \eqref{beta_2}
because it can estimate ${h}_t^{i 0}$ and $\mathbf{R} \, {\h}_t^{i,2}$ via training as mentioned in
Section~\ref{sec:coopPHY}. However, the trouble in recruiting relays on-the-fly is that the AP and the relays cannot directly
compute $\rate$ given by \eqref{beta_1}, since they cannot estimate the 
channel coefficients $h_t^{i \ell}$, for all $\ell \in {\mathcal C}_t^i$.
Some MAC randomized protocols have
recently been proposed \cite{Verde,Liu}, which get around the problem that
the AP and the relays do not have the necessary channel state information
to determine $\rate$. However, such protocols require the exchange
and/or the tracking of a large amount of network parameters that may incur unacceptable delays 
in a wireless video network. In particular, the first- and second-hop data rates are computed 
in \cite{Liu} by the source node using the average PER evaluated by simulations.
To quickly setup the cooperative transmission and, thus, reduce 
the delays, we propose a much simpler recruitment scheme that is based on
the closed-form formulas \eqref{beta_1} and \eqref{beta_2}. 
The proposed four-way protocol is reminiscent of the request-to-send (RTS) and clear-to-send (CTS)
handshaking used in carrier sense multiple access with collision avoidance (CSMA/CA),
which is extended to include a helper-ready to send (HTS) control message that is 
cooperatively transmitted by the relays using randomized STBC and 
a cooperative recruitment signal (CRS) that is sent by the AP to recruit relays. 
The idea of sending the HTS frame in cooperative mode has been originally 
proposed in \cite{Liu}. However, 
apart from the use of the HTS control message, the proposed protocol is different from that of \cite{Liu}
because we use a completely different recruitment policy.

All the control frames are transmitted at the base rate $\beta_0$ 
such that they can be decoded correctly, and 
the thresholds $BEP_{1}$ and $BEP_{2}$, as well as $L$ and $R_c$,
are fixed parameters that are known at all the nodes.
Fig.~\ref{fig:signaling} illustrates the signaling protocol for time slot $t$,
which consists of the nine steps detailed in Table~\ref{tab:protocol}.
We would like to highlight that, similar to the data transmitted in 
Phase II, the HTS message is a cooperative signal, i.e., all relays
jointly deliver the HTS frame using randomized STBC at the same time
and, hence, simultaneous transmissions do not cause a collision.
With reference to Table~\ref{tab:protocol},
the key observation is that the selection of the set ${\mathcal C}_t^i$
by virtue of \eqref{eq.Ci} is done in a distributed way and, moreover, by simply having
access to the channel state from the source $i$ to itself, i.e.,
$h_t^{i \ell}$, the $\ell$th candidate cooperative node can
{\em autonomously} determine if, by cooperating, it can improve the data rate of node $i$.
Another important observation is that the recruitment of the cooperative nodes and the assignment of the
data rates requires only four control messages for each source. In particular, the control information
exchange is independent of the number of recruited relays thanks to the randomization of the
cooperative transmission.
Moreover, the two parameters $\xi_t$ and $L$ need to be chosen appropriately.
The best choice for $\xi_t$ and $L$ requires global network information. A learning framework
would be very appropriate for their selection but we defer the treatment of this aspect to future work.
Finally, as for the impact of $L$ on the network performance, it should evidenced that randomized channels tend to behave statistically like their non-randomized counterparts \cite{Sirkeci07}, with deep-fade events that become as frequent as those of $L$ independent channels, as long as the number of cooperative nodes $N_t^i \geq L+1$.

\vspace{-5mm}
\section{Numerical Results}
\label{sec:results}

We consider a network with 50 potential relay nodes placed randomly and uniformly throughout the 100 m coverage range of a single AP as illustrated in Fig.~\ref{fig:snapshot}. We specify the placement of the video source(s) separately for each experiment.
Let $\eta_t^{i\ell}$ denote the distance in meters between the $i$th and $\ell$th nodes. The fading coefficient $h_t^{i\ell}$ over the $i \rightarrow \ell$ link is modeled as an i.i.d. ${\cal CN}(0, (\eta_t^{i\ell})^{-\delta})$ random variable, where $\delta$ is the path-loss exponent. Additionally, we assume that the entries of $\bR$, defined in Section~\ref{sec:coopPHY}, are i.i.d. ${\cal CN}(0, \frac{1}{L})$ random variables, where $L$ is the length of the STBC. If an error occurs in the packet transmission, then the packet remains in the frame buffer to be retransmitted in a future time slot (assuming the packet's deadline has not passed).

Due to space constraints, and because cooperation has the same impact in both uplink and downlink scenarios, we only present results for cooperative uplink video transmission. In particular, we consider four uplink scenarios:
\begin{enumerate}

\itemsep=0mm

\item
{\bf Single source}: In this scenario, we assume that a single source node is placed between 10 and 100 m directly to the right of the AP in Fig.~\ref{fig:snapshot}. We use this scenario to evaluate the transmission rates in the direct and cooperative transmission modes at different distances from the AP, and to determine a good self-selection parameter $\xi$.

\item
{\bf Homogeneous video sources}: This scenario mimics a surveillance application in which three cameras capture correlated video content in an outdoor environment and transmit it to the AP. The video sources are placed to the right of the AP as illustrated in Fig.~\ref{fig:relay_freqs}. To simulate correlated content, we assume that each of the three cameras stream the Foreman sequence (CIF resolution, 30 Hz framerate, encoded at 1.5 Mb/s) offset by several frames. Using homogeneous sources allows us to isolate the impact of cooperation on the video streaming performance by removing the additional layer of complexity introduced by heterogeneous video sources (e.g. different packet priorities and bit-rates among the video users).

\item
{\bf Heterogeneous video sources 1}: This scenario mimics a network in which users deploy entertainment applications such as video sharing or video conferencing. To simulate this, we assume that the three video sources illustrated in Fig.~\ref{fig:relay_freqs} transmit heterogeneous video content to the AP. Specifically, we assume that video user 1 streams the Coastguard sequence (CIF, 30 Hz, 1.5 Mb/s), video user 2 streams the Mobile sequence (CIF, 30 Hz, 2.0 Mb/s), and video user 3 streams the Foreman sequence (CIF, 30 Hz, 1.5 Mb/s).

\item
{\bf Heterogeneous video sources 2}: This is the same as the previous scenario, but with video user 2 streaming the Foreman sequence and video user 3 streaming the Mobile sequence.

\end{enumerate}
We note that the proposed framework can be applied using any video coder to compress the video data. However, for illustration, we use a scalable video coding scheme~\cite{ohm94}, which is attractive for wireless streaming applications because it provides on-the-fly application adaptation to channel conditions, support for a variety of wireless receivers with different resource and power constraints, and easy prioritization of video packets.

In our results, we deploy the proposed randomized STBC cooperation protocol outlined in Table~\ref{tab:protocol} and determine the optimal resource allocation and scheduling decisions using the distributed optimization introduced in Section~\ref{sec:distributed}. The relevant simulation parameters are given in Table~\ref{table:parameter_table}. Note that, in the homogeneous and heterogeneous scenarios described above, we simulate a network with a ``high'' transmission rate, using the symbol rate $\frac{1}{T_s} = 1250000$, and a network with a ``low'' transmission rate, using the symbol rate $\frac{1}{T_s} = 625000$ symbols/second.

\vspace{-5mm}
\subsection{Transmission rates and energy consumption}
\label{sec:coop-details}

In this subsection, we consider the single source scenario described above. Fig.~\ref{fig:coop_stats} illustrates the performance of the proposed cooperation protocol for time-invariant self-selection parameter values $\xi_t = \xi \in \{0.1,0.2,\ldots,0.5\}$, and the performance of direct transmission, given a single source transmitting to the AP. Note that these results hold regardless of the symbol rate. In particular, the ``transmission rate'' in Fig.~\ref{fig:coop_stats}(a) is presented in terms of the spectral efficiency (bits/second/Hz); the probability of cooperation in Fig.~\ref{fig:coop_stats}(b) and the average number of recruited relays in Fig.~\ref{fig:coop_stats}(c) only depend on the spectral efficiency; and the energy results reported in Figs.~\ref{fig:coop_stats}(d-f) are normalized by setting the symbol energy ${\cal E}_s =\frac{T_s}{P} $ (or, equivalently, ${\cal P}_s = \frac{1}{P}$) in \eqref{E-0}, \eqref{E-source}, and \eqref{E-relay}.

From Fig.~\ref{fig:coop_stats}(a), it is clear that nodes further from the AP utilize cooperation more frequently than nodes closer to the AP. This is because, on average, distant nodes have the feeblest direct signals to the AP due to path-loss and, therefore, have the most to gain from the channel diversity afforded to them by cooperation. It is also clear from Fig.~\ref{fig:coop_stats}(a) that cooperation is utilized more frequently as the self-selection parameter $\xi$ increases. This is because, as illustrated in Fig.~\ref{fig:coop_stats}(c), more relays satisfy the self-selection condition in step 5 of Table~\ref{tab:protocol} for larger values of $\xi$. However, larger values of $\xi$ yield relay nodes for which $\frac{\beta_t^{i 0}}{\beta_t^{i \ell}}$ is large, which leads to a bad transmission rate over the bottleneck hop-1 cooperative link. Due to this poor bottleneck rate and the large number of recruited relays, the average transmission rate shown in Fig.~\ref{fig:coop_stats}(b) declines for $\xi > 0.2$ even while the total energy consumption increases as illustrated in Fig.~\ref{fig:coop_stats}(d). In contrast, lower values of the self-selection parameter (e.g. $\xi < 0.2$) lead to too few nodes being recruited to achieve large cooperative gains, but yield lower energy consumption. Interestingly, the same properties of relay nodes that are desirable for achieving the best transmission rate -- a balance between the number and quality of relays -- is also important for achieving a high throughput-to-energy ratio. For example, Fig.~\ref{fig:coop_stats}(e) shows us that at 100 m from the AP, the average throughput-to-energy ratio for cooperative transmission with $\xi = 0.2$ is a little less than 0.8, which is close to the throughput-to-energy ratio of a direct transmission, which is 1 at 100 m.

Although the average network energy required to support a cooperative transmission is larger than that required for a direct transmission, this increase is moderate compared to the amount of energy the source node would have to expend in order to achieve the same transmission rate as the cooperative transmission, 
i.e., to attain $\beta_t^{i 0}= \beta_t^{i,\text{coop}}$ requires a large increase
in the transmission power with respect to the cooperative case. This is illustrated in Fig.~\ref{fig:coop_stats}(f), where, for example, it is shown that transmitting in the direct mode at the rate attainable under cooperative transmission with $\xi = 0.2$ requires approximately 13.5 normalized Joules/Packet compared to approximately 3.5 normalized Joules/Packet in the cooperative case shown in Fig.~\ref{fig:coop_stats}(d).\sfootnote{The results in Fig.~\ref{fig:coop_stats}(f) were obtained by fixing the transmission rate and adapting the symbol energy, which is in contrast to the current problem formulation in which we fix the symbol energy and adapt the transmission rate. Specifically, we calculated the symbol energy $\tilde{\cal E}_s$ required to set $\beta_t^{i \ell} = \beta_t^{i,\text{coop}}$ by rearranging \eqref{beta}. Note that we could also force $\beta_t^{i,\text{coop}} = \beta_t^{i 0}$ to achieve lower energy consumption at the same transmission rate as the direct mode.}

In the remainder of our experiments, we let the self-selection parameter $\xi_t=\xi=0.2$ because, as illustrated in Figs.~\ref{fig:coop_stats}(b,e), this value provides a large average transmission rate over the AP's entire coverage range and a high throughput-to-energy ratio.
With $\xi = 0.2$, Fig.~\ref{fig:relay_freqs} illustrates the activation frequencies for different relays and Fig.~\ref{fig:burden} illustrates the average energy consumed by the source and relay nodes. Notice that, under a cooperative transmission, the source node actually uses less power than under a direct transmission, which partially compensates for the extra energy it may expend acting as a relay for other nodes.

\vspace{-5mm}
\subsection{Transmission rate, resource price, and resource utilization}
\label{sec:rate-price}

Fig.~\ref{fig:tx_rates} illustrates the average transmission rates achieved by the video users in the homogeneous and heterogeneous scenarios in networks that support high and low transmission rates. Recall that the resource cost $x_t^i$ incurred by user $i$ is inversely proportional to the transmission rate [see~\eqref{eq:sched2rsrc}], which decreases as the distance to the AP increases due to path loss. Hence, when only direct transmission is available, user 3 tends to resign itself to a low average transmission rate because the cost of using resources is too high. Cooperation increases the average transmission rate, thereby providing user 3 lower cost access to the channel to transmit more data.

In the homogeneous scenario illustrated in Fig.~\ref{fig:tx_rates}(a), cooperation tends to equalize the resource allocations to the three users (this is especially evident in the cooperative case with a high transmission rate). This is because the homogeneous users have identical utility functions; thus, when sufficient resources are available, it is optimal for them to all operate at the same point of their resource-utility curves. In contrast, when heterogeneous users with different utility functions are introduced, the transmission rates change to reflect the priorities of the different users' video data. Observing Fig.~\ref{fig:tx_rates}(b,c), it is clear that the additional resources afforded by cooperation tend to go to the highest priority video user, who, in our simulations, is the user streaming the Mobile sequence.

Recall that users autonomously optimize their resource allocation and scheduling actions given the resource price $\lambda$ announced by the AP. Table~\ref{table:rsrc_price} illustrates the optimal resource prices in the homogeneous and heterogeneous scenarios along with the average network resource utilization, i.e. the average of $\sum_{i=1}^M x_t^i$. There are several interesting results in Table~\ref{table:rsrc_price}.
First, the average network resource utilization is often considerably less than the total available resources. This is due to the distributed nature of the resource allocation and scheduling algorithm, which requires users to be conservative in their resource usage to ensure feasible allocations.
Second, in the cooperative transmission mode, the resource price tends to increase and the utilization tends to decrease when going from a high rate to a low rate network, regardless of the streaming scenario. The resource price increases because the network supports lower rates, but the demand stays the same, which increases congestion. The utilization decreases because lower rates yield a coarser set of feasible resource allocations for each user (see \eqref{eq:sched2rsrc}).
Third, in the high rate network, the resource price tends to decrease and the utilization tends to increase when going from the direct to the cooperative transmission mode, regardless of the streaming scenario. The resource price decreases because cooperation floods the network with resources without significantly impacting demand, which reduces congestion. The utilization increases because the cooperative transmission mode supports higher transmission rates, which yield a finer set of feasible resource allocations for each user (see \eqref{eq:sched2rsrc}).
Finally, in the low rate network, the resource price and utilization tend to increase when going from the direct to the cooperative transmission mode. In contrast to the high rate network, the resource price {\em increases} because users that resigned themselves to very low transmission rates in the direct scenario suddenly demand resources when cooperation is enabled. The resource price increases in our simulations because the enlarged demand pool exceeds the additional supply of resources that is introduced by cooperation. In other words, users that would like to transmit video, but are too far from the AP for a direct transmission, are essentially absent from the network when only direct transmission is available, and therefore do not significantly impact the resource price and resource utilization; however, when cooperation is enabled, these users are suddenly within range of the AP, and will therefore demand resources, which increases congestion. As in the other cases, the utilization increases because the transmission rate increases.

\vspace{-5mm}
\subsection{Discounted utility and video quality comparison}
\label{sec:video-performance}

Table~\ref{table:discounted_utility_table} compares the expected value of the objective function in \eqref{eq:sum_user_bellman} (with respect to the stationary distribution over the states) obtained in the homogeneous and heterogeneous scenarios. Because the objective function includes a Lagrangian cost term, it is not always indicative of the corresponding video quality. For this reason, we also include  
Table~\ref{table:psnr_table} to compare the video quality obtained in the homogeneous and heterogeneous scenarios, where video quality is measured in terms of peak-signal-to-noise ratio (PSNR in dB) of the luminance channel.
In the network that supports a high transmission rate, the user furthest from the AP (user 3) benefits on the order of 5-10 dB PSNR from cooperation, while the video user closest to the AP (user 1) is penalized by less than 0.4 dB PSNR. In the network that only supports low transmission rates, user 3 goes from transmitting too little data to decode the video (denoted by ``$---$'') to transmitting enough data to decode at low quality, while penalizing user 1 by less than 0.8 dB PSNR.
Note that these PSNR results implicitly reflect the end-to-end delay from the source, through the relays, to the destination. This is because the sophisticated traffic model in subsection~\ref{sec:app-model} accounts for the fact that frames that are not entirely received before their deadlines, and frames that depend on them, cannot be decoded and therefore do not contribute to the received video quality.

\vspace{-5mm}
\section{Conclusion}
\label{sec:concl}

We introduced a cooperative multiple access strategy that enables nodes with high priority video data to be serviced while simultaneously exploiting the diversity of channel fading states in the network using a randomized STBC cooperation protocol.
We formulated the dynamic multi-user video transmission problem with cooperation as an MU-MDP and we used Lagrangian relaxation with a uniform resource price to decompose the MU-MDP into local MDPs at each user.
We analytically proved that opportunistic (myopic) cooperation strategies are optimal, and therefore the users' local MDPs only need to determine their optimal resource allocation and scheduling policies based on their experienced cooperative transmission rates.
Subsequently, we proposed a randomized STBC cooperation protocol that enables nodes to opportunistically and distributively self-select themselves as cooperative relays.
Finally, we experimentally showed that the proposed cooperation strategy significantly improves the video quality of nodes with feeble direct links to the AP, without significantly penalizing other users, and with only moderate increases in total network energy consumption.


\section*{Appendix~I: proof of Theorem~\ref{Thm-1}}
\label{A}

The transmission rate $\beta^i$ is a function of the cooperation decision $z^i$ and the channel state ${\bf H}$, i.e., we can write $\beta^i = \beta^i \left( {\bf H} ,z^i \right)$. Thus, the cooperation decision impacts the immediate utility because it constrains the set of feasible scheduling actions ${\cal P}^i \left( {\cal T}^i ,\beta^i \right)$ through the packet constraint $\|\mathbf{y}^i\|_1\leq \frac{R  \beta^i}{P}$.

Let $z_{{\rm{opp}}}^{i * }  = \arg {\max }_{z^i } \left\{ {\beta^i \left( {\bf H},z^i \right)} \right\}$ and $\beta_{{\rm{opp}}}^{i * }  = {\max}_{z^i} \left\{ \beta^i \left( {\bf H},z^i \right) \right\}$ denote the optimal opportunistic cooperation decision and the maximum transmission rate, respectively. Selecting the cooperation decision that maximizes the immediate transmission rate enlarges the set of feasible scheduling actions, i.e., ${\cal P}^i \left( {\cal T}^i ,\beta^i \right) \subseteq {\cal P}^i \left( {\cal T}^i ,\beta_{{\rm{opp}}}^{i * } \right)$, for all $\beta^i  \le \beta_{{\rm{opp}}}^{i * }$.
We now show that the optimal opportunistic cooperation decision enables a user to maximize its long-term utility for any $\alpha \geq 0$.
Let
$ u^i_\lambda \left( {{\cal T}^i,\beta^i, {\bf y}^i} \right) = \sum\nolimits_{j \in {\cal F}^i } {q_j^i y_j^i }  - \lambda \left( {x^i  - \frac{1}{M}} \right) $
denote the utility less the cost, where $x^i$ is given by \eqref{eq:sched2rsrc}. Under the optimal opportunistic cooperation decision, we have
\begin{eqnarray}
U_\lambda^{i,*}  \left( s^i \right)
&=& \max_{{\bf y}^i  \in
{\cal P}^i \left( {\cal T}^i ,\beta_{\rm{opp}}^{i *}\right)} \left\{ u^i_\lambda \left( {\cal T}^i, \beta_{\rm{opp}}^{i *}, {\bf y}^i\right)
 + \alpha \sum_{s^{i \prime}} {p\left( s^{i \prime}  |s^i ,{\bf y }^i  \right)U_\lambda^{i, *}  \left( s^{i \prime} \right)}  \right\} \\
&\ge& \max_{{\bf{y}}^i  \in {\cal P}^i \left( {{\cal T}^i,\beta^i} \right)} \left\{ u^i_\lambda \left( {\cal T}^i, \beta^i, {\bf y}^i \right) + \alpha \sum_{s^{i \prime}} {p\left( {s^{i \prime}  | s^i ,{\bf y}^i } \right) \bar{U}_\lambda^{i, *}  \left( {s^{i \prime}  } \right)}\right\}= \bar{U}_\lambda^{i}  \left( s^i \right) \:,
\end{eqnarray}
where the inequality is due to the fact that ${\cal P}^i \left( {\cal T}^i ,\beta^i \right) \subseteq {\cal P}^i \left( {\cal T}^i, \beta_{\rm{opp}}^{i * } \right)$ for all $\beta^i \leq \beta_{\rm{opp}}^{i * }$. Thus, the optimal opportunistic cooperative decision maximizes the long-term utility.


\bibliographystyle{IEEEbib}

\bibliography{strings,refs}

%
%

\newpage

\begin{figure}[t]
\centering
\includegraphics[width=0.8\linewidth]{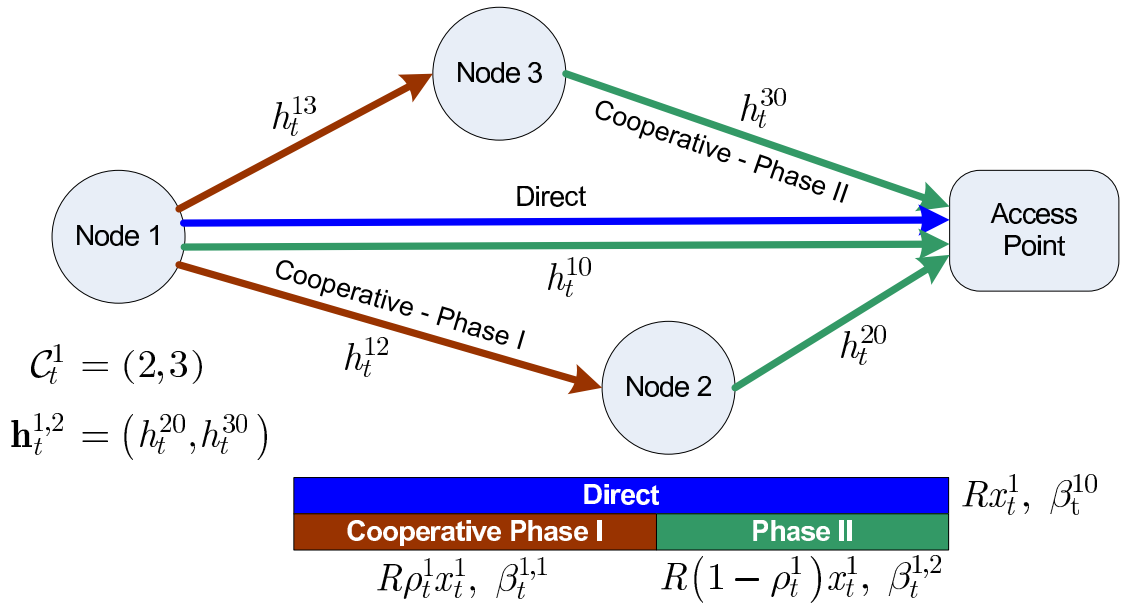}
\caption{An uplink wireless video network with cooperation. A downlink wireless video network with cooperation can be visualized by switching the positions of node 1 and the access point.}
\label{fig:network}
\end{figure}

%
%

\begin{figure}[t]
\centering
\includegraphics[width=1.0\linewidth]{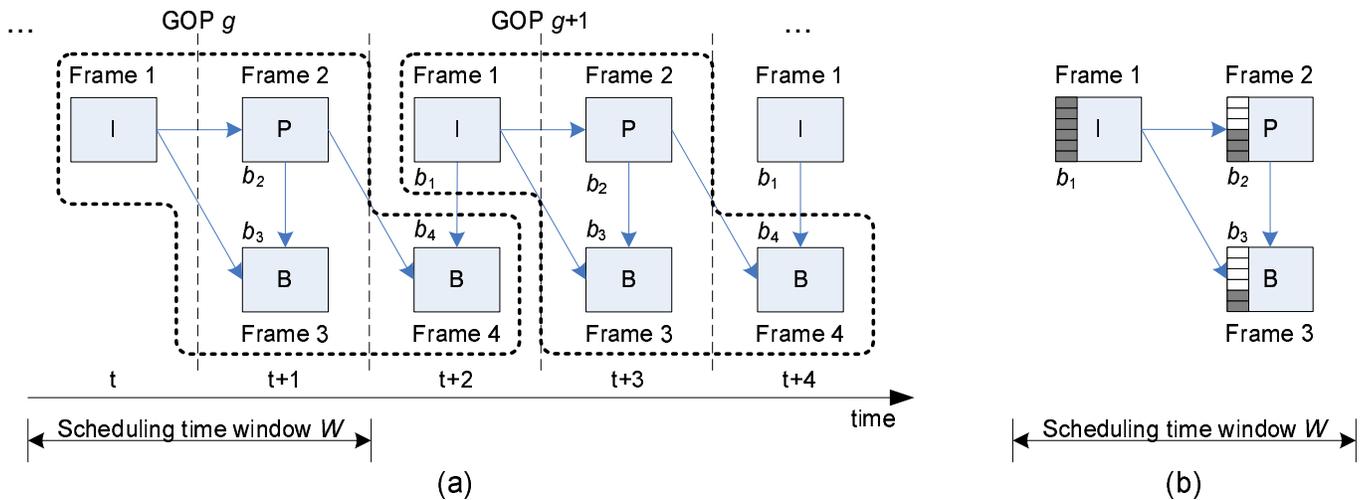}
\caption{(a) Illustrative DAG dependencies and scheduling time window using IBPB GOP structure. The schedulable frame sets defined by the scheduling time window $W$ are ${\cal F}_t=\{1,2,3\}$,
${\cal F}_{t+1}=\{2,3,4,1\}$, ${\cal F}_{t+2}=\{4,1,2,3\}$, ${\cal F}_{t+3}=\{2,3,4,1\}$, etc.
Clearly, ${\cal F}_t$ is periodic with period $T=3$ excluding the initial time $t$, and each GOP contains $N=4$ frames. (b) Traffic state detail for schedulable frame set ${\cal F}_t=\{1,2,3\}$. $b_j$ denotes the state of the $j$th frame's buffer, where $j \in {\cal F}_t=\{1,2,3\}$.}
\label{fig:traffic_state}
\end{figure}

%
%

\begin{figure}[t]
\centering
\includegraphics[width=0.85\linewidth]{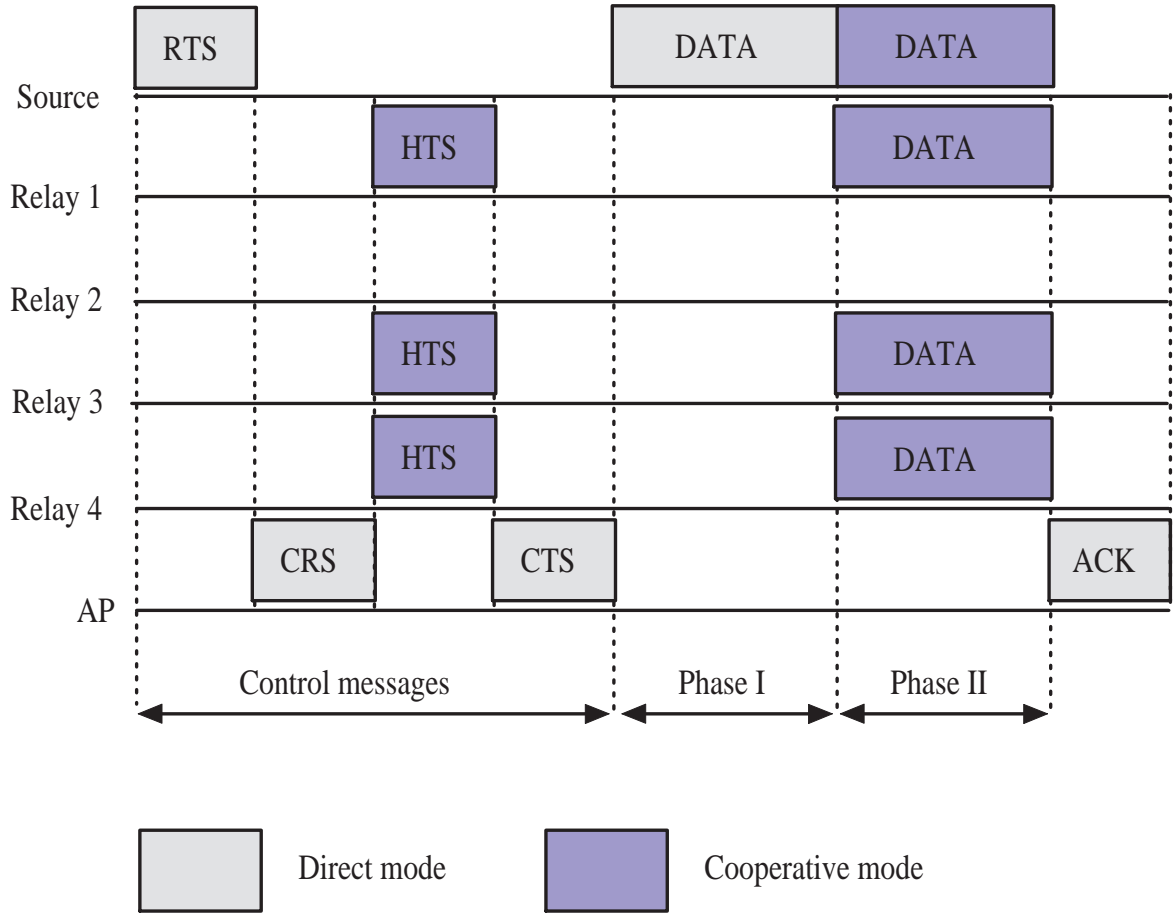}
\caption{Signaling protocol for randomized STBC cooperation.
}
\label{fig:signaling}
\end{figure}

%
%

\begin{table}
\centering
\caption{The proposed protocol for randomized STBC cooperation.}
\fbox{\hspace{0mm}
\begin{minipage}[t]{0.90\textwidth}
\vspace{3mm}

\begin{enumerate}[Step 1)]

\itemsep=0mm

\item
The $i$th source initiates the handshaking by transmitting the RTS frame, which
announces its desire to transmit data symbols and also includes training symbols 
that are used by the other nodes
to estimate the link gains. 

\item
From the RTS message, the
AP estimates the channel coefficients $h_t^{i 0}$
and, hence, determines $\beta_t^{i 0}$.
At the same time,  by passively listening to all the RTS messages
occurring in the network, the other nodes estimate their
respective channel parameters $h_t^{i \ell}$,
for $\ell \in \{1, 2, \ldots, M\}-\{i\}$, and,
thus, determine $\beta_t^{\,i \ell}$.

\item
The AP responds with the CRS message that provides feedback on $\beta_t^{i 0}$ to all
the candidate cooperative nodes and the source, as well as a second parameter 
$0 < \xi_t < 1$, which is used to recruit relays.

\item
From the CRS message, the $i$th source learns that a cooperative transmission
may take place and, if such a communication mode
will be subsequently confirmed by the AP, the data rate to be used in Phase I
is given by
\be
\rate = \frac{\beta_t^{i 0}}{\xi_t} \: .
\label{data-I}
\ee

\item
After receiving the CRS frame, 
the candidate cooperative nodes can self-select themselves according to the rule:
\begin{equation}
\label{eq.Ci}
{\mathcal C}_t^i=\left\{\ell: \frac{\beta_t^{i 0}}{{\beta}_t^{\,i \ell}}\leq \xi_t  \right\} \: ,
\end{equation}
where ${\beta}_t^{\,i \ell}$ is defined using \eqref{beta} by replacing $BEP$ with $BEP_1$.
The nodes belonging to the formed group
${\mathcal C}_t^i$ send in unison the HTS message using randomized STBC
of size $L$ as described in Section~\ref{sec:coopPHY}, which piggybacks training symbols that 
are used by the AP to estimate the
cooperative channel vector $\mathbf{R} \, {\h}_t^{i,2}$.

\item
After estimating the channel of the cooperative link, 
the AP computes the data rate $\ratee$ by resorting to \eqref{beta_2} and verifies
the fulfillment of the following condition
\be
\frac{1}{R_c \, \ratee} <  \frac{1-\xi_t}{\beta_t^{i 0}} \: .
\label{cond-2-table}
\ee
If \eqref{cond-2} holds, then, accounting also for \eqref{data-I}, it can be
inferred that cooperation is better than direct transmission, i.e., condition
\eqref{coop-useful} is satisfied: in this case, $z_t^i=1$.
Otherwise, cooperation is useless: in this case, $z_t^i=0$.
Therefore, the AP responds with a CTS
frame, which conveys the following information:
(i) the cooperation decision $z_t^i$;
(ii) if $z_t^i=1$, the data rate $\ratee$ in Phase II given by \eqref{beta_2};
(iii) the resource price $\lambda$ computed as explained in Section~\ref{sec:opt-scheduling}.

\item
If $z_t^i=1$ in the CTS frame, the source proceeds with sending in Phase I 
its data frame at rate \eqref{data-I}; otherwise, if $z_t^i=0$, it transmits in direct mode
at the data rate  $\beta_t^{i 0}$.

\item
If $z_t^i=1$ in the CTS frame, along with the source, the self-recruited relays 
cooperatively transmit in Phase II the data frame at rate $\ratee$;
otherwise, if $z_t^i=0$, they remain silent.

\item
The AP finishes the procedure by sending back to the source an acknowledgement (ACK) message.

\end{enumerate}

\vspace{3mm}
\end{minipage}
\hspace{0mm} }
\label{tab:protocol}
\end{table}

%
%

\begin{table}[t]
\centering
\caption{Simulation parameters.}
\includegraphics[width=0.45\linewidth]{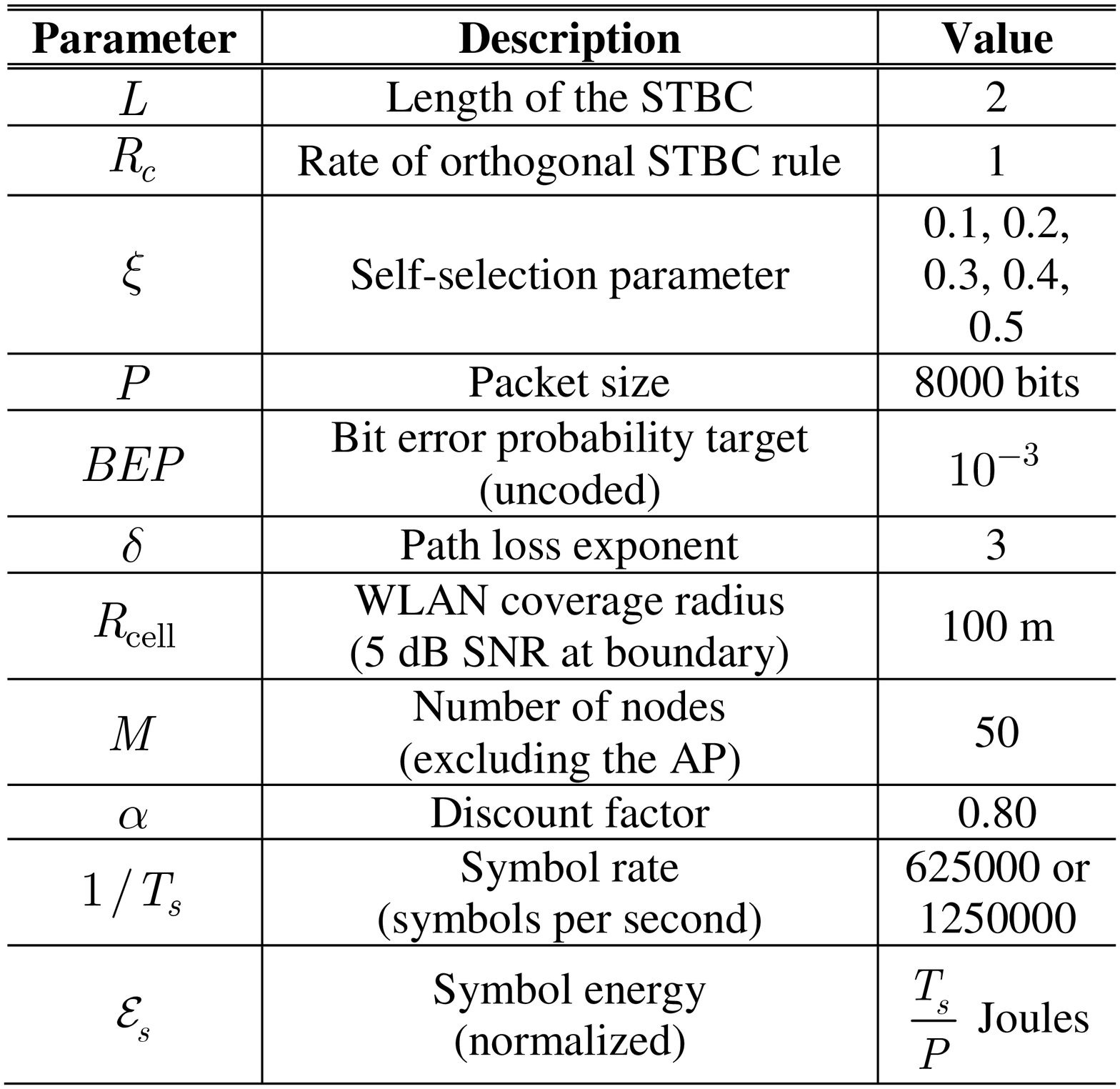}
\label{table:parameter_table}
\end{table}

%
%

\begin{figure}[t]
\centering
\includegraphics[width=0.45\linewidth]{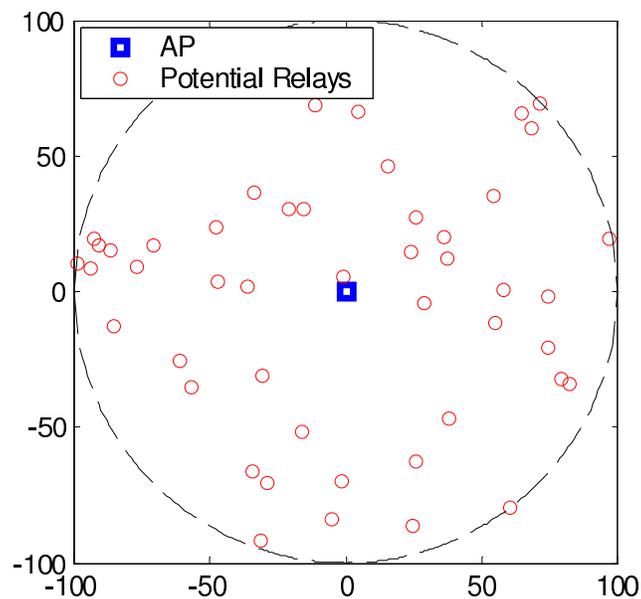}
\caption{Network topology used for numerical results. There are 50 nodes placed randomly and uniformly throughout the AP's 100 m coverage range.}
\label{fig:snapshot}
\end{figure}

%
%

\begin{figure}[t]
\centering
\includegraphics[width=0.83\linewidth]{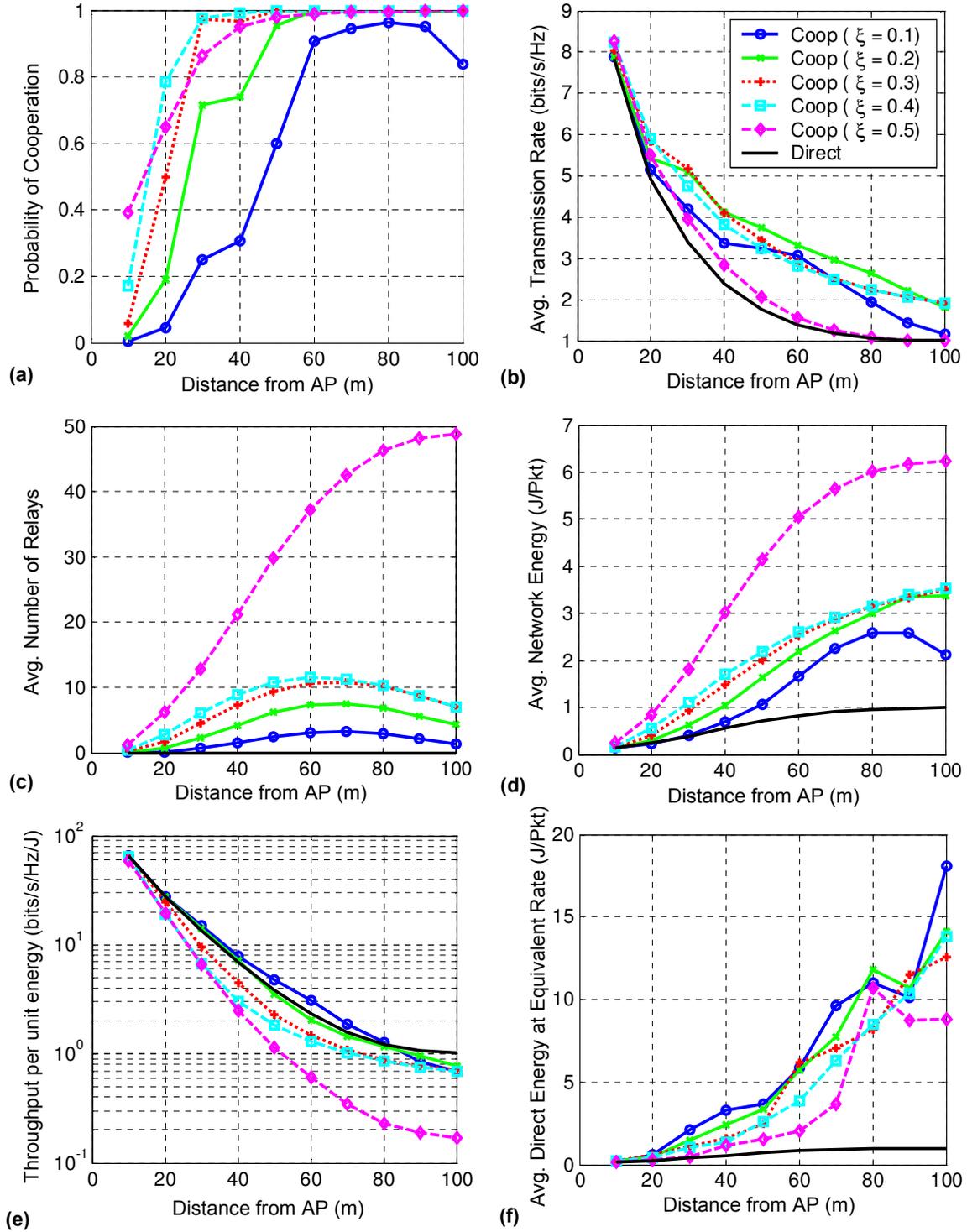}
\caption{Cooperative transmission statistics for different values of the self-selection parameter $\xi$ and for different distances from the AP. (a) Average transmission rate. (b) Probability of cooperation being optimal. (c) Average number of recruited relays. (d) Average energy consumed in the network per packet transmission. (e) Throughput per unit energy. (f) Average energy required by the source to transmit one packet at the rate $\beta_t^{i0} = \beta_t^{i,\text{coop}}$.}
\label{fig:coop_stats}
\end{figure}

%
%

\begin{figure}[t]
\centering
\includegraphics[width=0.5\linewidth]{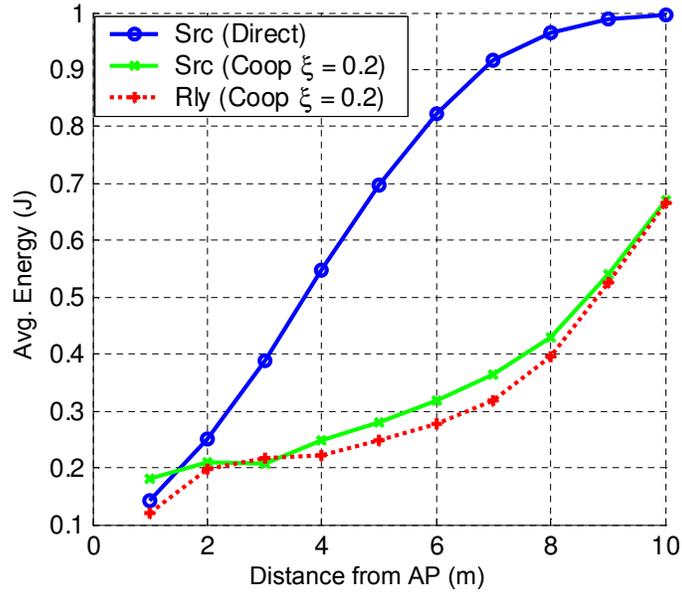}
\caption{Average energy consumed by source (Src) during direct and cooperative transmission, and average energy consumed by a relay (Rly) during cooperative transmission. A self-selection parameter $\xi=0.2$ is used for cooperative transmission.}
\label{fig:burden}
\end{figure}
%
%

\begin{figure}[t]
\centering
\includegraphics[width=1.0\linewidth]{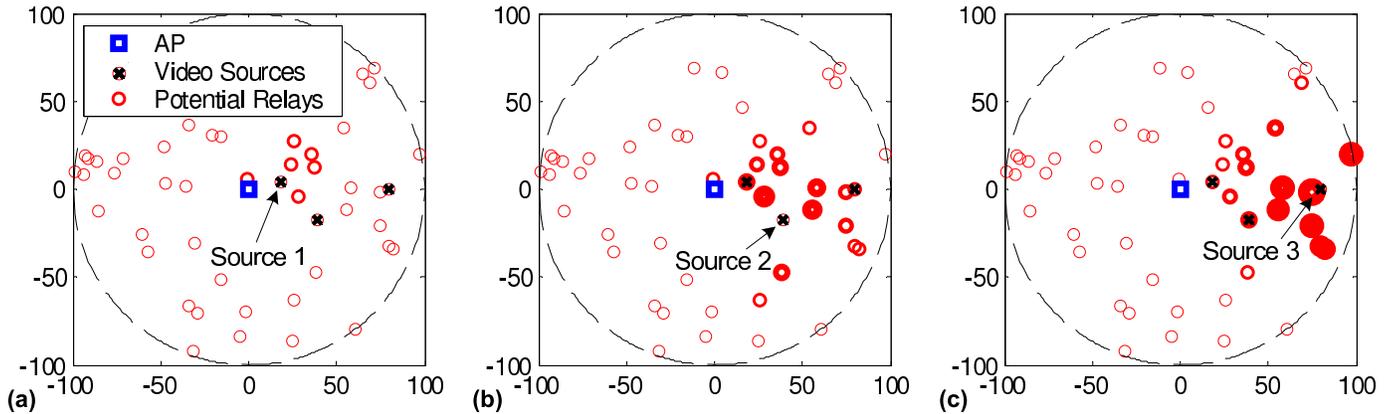}
\caption{Video source placement for homogeneous and heterogeneous streaming scenarios. Three video sources are placed 20 m, 45 m, and 80 m from the AP at angles $25^\circ$, $-30^\circ$, and $0^\circ$, respectively. (a,b,c) Relay activation frequencies for video source 1, 2, and 3, respectively, with self-selection parameter $\xi=0.2$. The size of the relay is proportional to the frequency with which it is activated as a helper for the corresponding source.}
\label{fig:relay_freqs}
\end{figure}

%
%

\begin{figure}[t]
\centering
\includegraphics[width=0.9\linewidth]{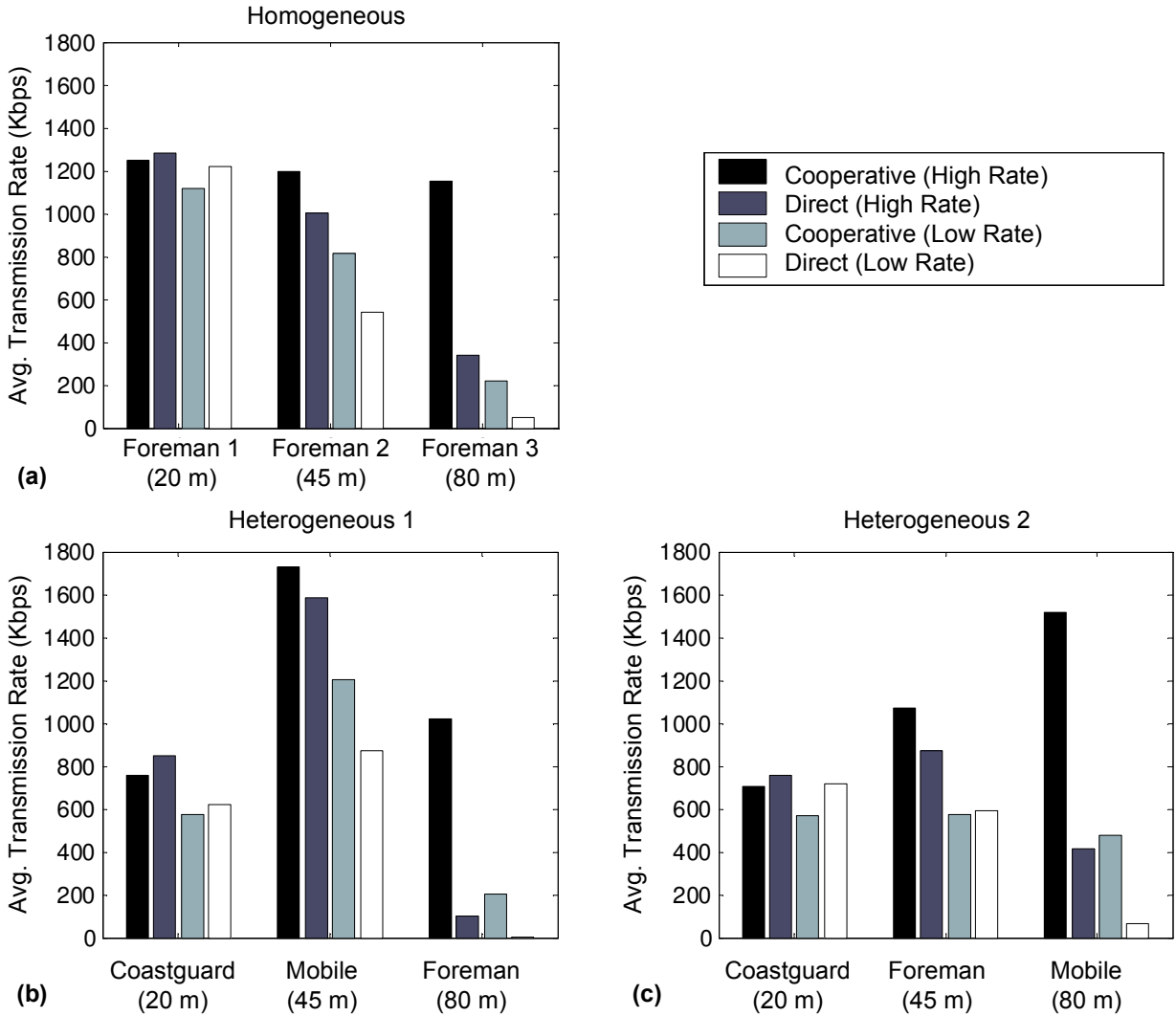}
\caption{Average transmission rates in different scenarios. (a) Homogeneous video sources. (b,c) Heterogeneous video sources.}
\label{fig:tx_rates}
\end{figure}

%
%

\begin{table}[t]
\centering
\caption{Resource prices and resource utilization in different scenarios.}
\includegraphics[width=0.6\linewidth]{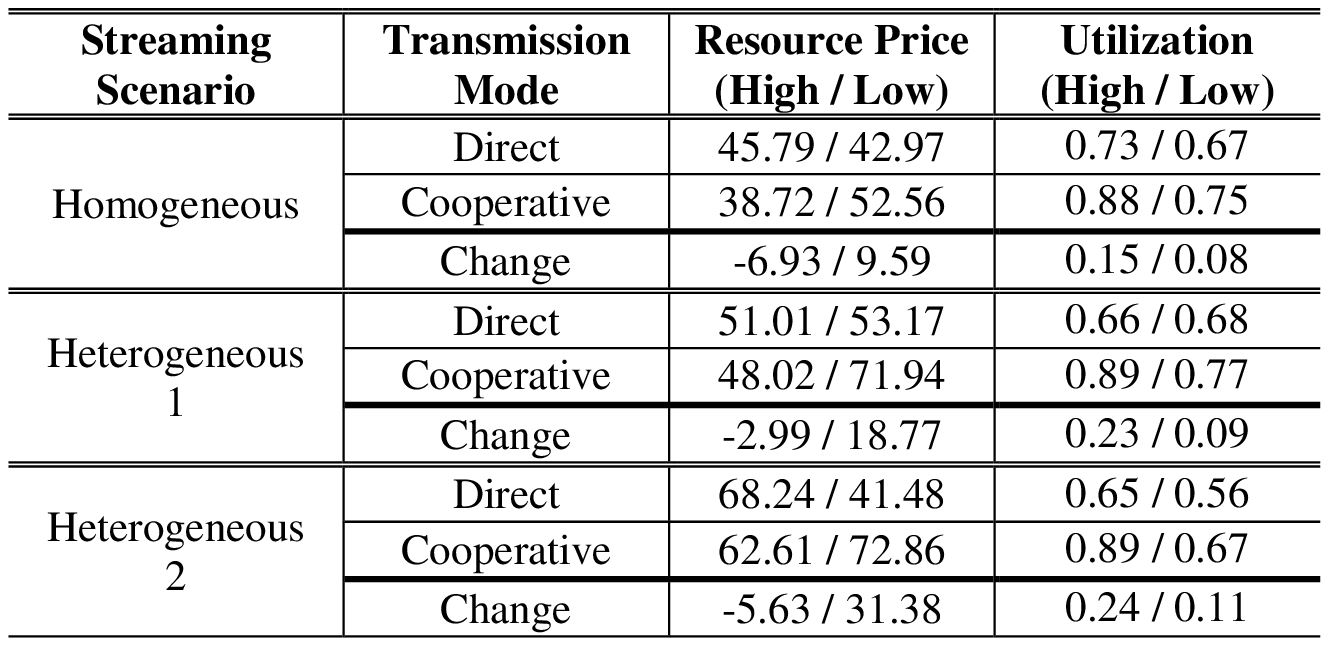}
\label{table:rsrc_price}
\end{table}

%
%

\begin{table}[t]
\centering
\caption{Expected discounted average utility in different scenarios.}
\includegraphics[width=0.9\linewidth]{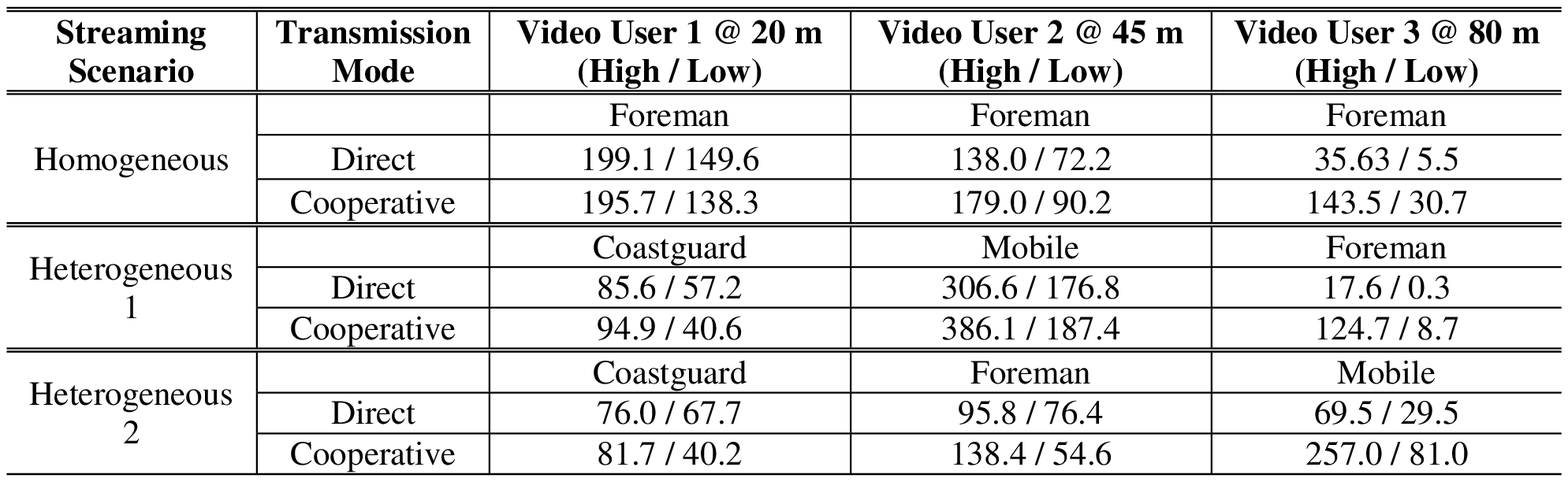}
\label{table:discounted_utility_table}
\end{table}
%
%

\begin{table}[t]
\centering
\caption{Average video quality (PSNR) in different scenarios.}
\includegraphics[width=0.9\linewidth]{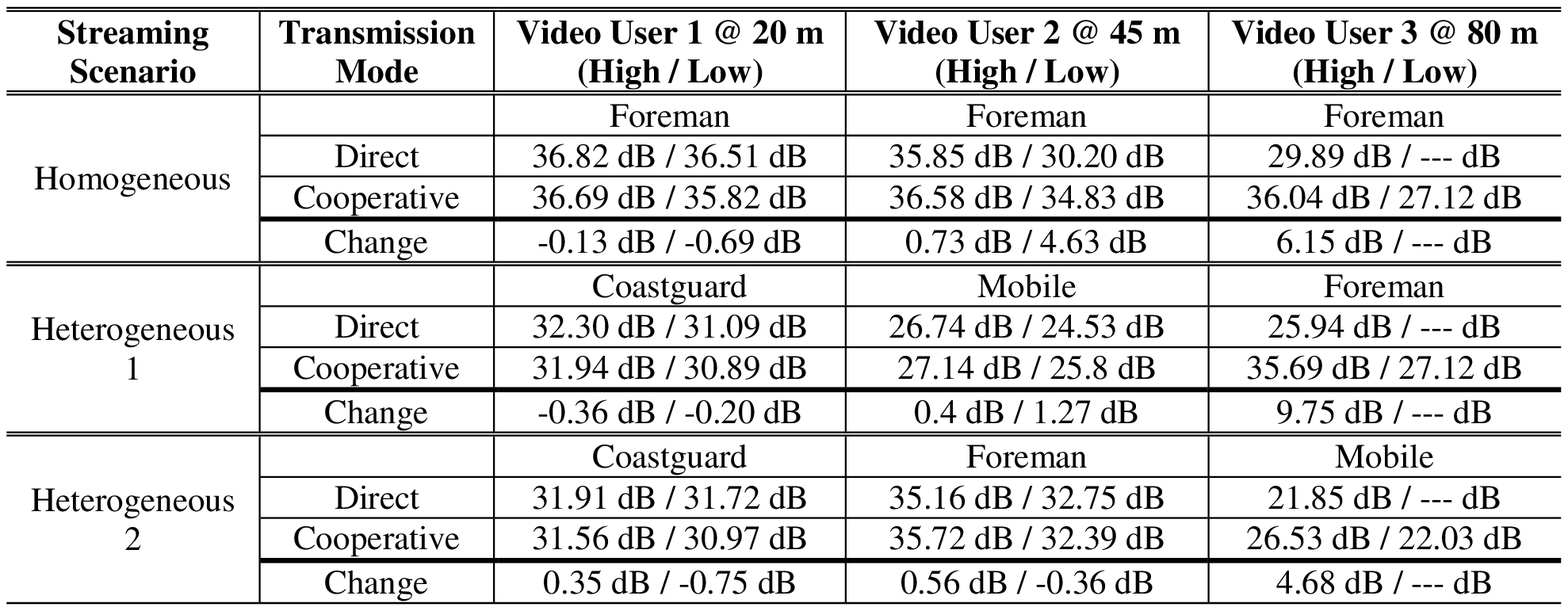}
\label{table:psnr_table}
\end{table}

\end{document}